\newtheorem{thm}{Theorem}[section]
\newtheorem{lem}[thm]{Lemma}
\newtheorem{cor}[thm]{Corollary}
\newtheorem{pro}[thm]{Proposition}
\theoremstyle{definition}
\newtheorem{rmk}[thm]{Remark}
\newtheorem{defi}[thm]{Definition}
\newcommand{\nc}{\newcommand}
\newcommand{\delete}[1]{}
\nc{\mlabel}[1]{\label{#1}}  % Use this to suppress names
\nc{\mcite}[1]{\cite{#1}}  % Use this to suppress names
\nc{\mref}[1]{\ref{#1}}  % Use this to suppress names
\nc{\mbibitem}[1]{\bibitem{#1}} % Use this to show number
\nc{\mlabel}[1]{\label{#1}{\hfill \hspace{1cm}{\bf{{\ }\hfill(#1)}}}}
\nc{\mcite}[1]{\cite{#1}{{\em{{\ }(#1)}}}}  % Use this lines to show names
\nc{\mref}[1]{\ref{#1}{{\em{{\ }(#1)}}}}  % Use this lines to show names
\nc{\mbibitem}[1]{\bibitem[\em #1]{#1}} % Use this to show name
\newcommand {\emptycomment}[1]{}
\nc{\oprn}{\theta}
\nc{\Oprn}{\Theta}
\nc{\calo}{\mathcal{O}}
\nc{\oop}{$\mathcal{O}$-operator\xspace}
\nc{\oops}{$\mathcal{O}$-operators\xspace}
\nc{\mrho}{{\bm{\varrho}}}
\nc{\emk}{\mathbf{K}}
\nc{\invlim}{\displaystyle{\lim_{\longleftarrow}}\,}
\nc{\ot}{\otimes}
\newcommand{\lon }{\,\rightarrow\,}
\newcommand{\be }{\begin{equation}}
\newcommand{\ee }{\end{equation}}
\newcommand{\LR}{$\mathsf{Lie}\mathsf{Rep}$~}
\newcommand{\MC}{$\mathsf{MC}$~}
\newcommand{\RB}{$\mathsf{RB}$~}
\newcommand{\g}{\mathfrak g}
\newcommand{\h}{\mathfrak h}
\newcommand{\huaD}{\mathcal{D}}
\newcommand{\huaH}{\mathcal{H}}
\newcommand{\huaO}{{\mathcal{O}}}
\newcommand{\frki}{\mathfrak i}
\newcommand{\frkl}{\mathfrak l}
\newcommand{\frkp}{\mathfrak p}
\newcommand{\frkC}{\mathfrak C}
\newcommand{\Courant}[1]{\left\llbracket  #1\right\rrbracket }
\newcommand{\Id}{{\rm{Id}}}
\newcommand{\br}[1]{   [ \cdot,    \cdot  ]   }
\newcommand{\dM}{\mathrm{d}}
\newcommand{\Hom}{\mathrm{Hom}}
\newcommand{\Der}{\mathrm{Der}}
\newcommand{\Lie}{\mathrm{Lie}}
\newcommand{\TLB}{\mathrm{TLB}}
\newcommand{\CE}{\mathsf{CE}}
\newcommand{\SN}{\mathsf{SN}}
\newcommand{\NR}{\mathsf{NR}}
\newcommand{\gl}{\mathfrak {gl}}
\newcommand{\ad}{\mathrm{ad}}
\newcommand{\Img}{\mathrm{Im}}
\newcommand{\Sym}{\mathsf{S}}
\newcommand{\Ten}{\mathsf{T}}
\nc{\CV}{\mathbf{C}}
\begin{document}

\title[A survey on deformations, cohomologies  and homotopies  of relative RB Lie algebras]{A survey   on deformations, cohomologies  and homotopies  of relative Rota-Baxter Lie algebras}

\author{Yunhe Sheng}
\address{Department of Mathematics, Jilin University, Changchun 130012, Jilin, China, and Sino-Russian Mathematics Center, Peking University, Beijing 100871, China.}
\email{shengyh@jlu.edu.cn}

%\date{\today}

\begin{abstract}
In this paper, we review deformation, cohomology and homotopy theories of relative Rota-Baxter ($\mathsf{RB}$) Lie algebras, which have attracted quite much interest  recently. Using Voronov's higher derived brackets, one can obtain an $L_\infty$-algebra whose Maurer-Cartan elements are relative \RB Lie algebras. Then using the twisting method, one can obtain the $L_\infty$-algebra that controls deformations of a relative \RB Lie algebra. Meanwhile, the cohomologies of relative \RB Lie algebras can also be defined with the help of the twisted $L_\infty$-algebra.
Using the controlling algebra approach, one can also introduce the notion of   homotopy relative \RB Lie algebras with close connection to pre-Lie$_\infty$-algebras. Finally, we briefly review deformation, cohomology and homotopy theories of relative $\mathsf{RB}$ Lie algebras of nonzero weights.

\end{abstract}

%\subjclass[2010]{17B37,81R50,17B56,81R12,16T26,17A30,17B62}

\keywords{Cohomology, deformation, homotopy, $L_\infty$-algebra,   Rota-Baxter algebra,    triangular Lie bialgebra}

\maketitle

\vspace{-1.1cm}

\tableofcontents

\allowdisplaybreaks

\section{Introduction}\mlabel{sec:intr}

The concept of Rota-Baxter ($\mathsf{RB}$) operators on associative algebras was
introduced   by G. Baxter \cite{Ba} in his study of
fluctuation theory in probability. Recently it has found many
applications, including Connes-Kreimer's~\cite{CK} algebraic
approach to the renormalization in perturbative quantum field
theory. \RB operators lead to the splitting of
operads~\cite{BBGN,PBG}, and are closely related to quasisymmetric functions and Hopf algebras \cite{Fard,Yu-Guo}.  Recently the relationship between \RB operators and double Poisson algebras were studied in \cite{Goncharov}.
In the Lie algebra context, a \RB operator was introduced independently in the 1980s as the
operator form of the classical Yang-Baxter equation. For further details on
\RB operators, see ~\cite{Gub-AMS,Gub}. To better understand the classical Yang-Baxter equation and
 related integrable systems, the more general notion of a relative \RB operator (which was called an \oop in the original literature)
on a Lie algebra was introduced by Kupershmidt~\cite{Ku}. Relative \RB operators provide solutions of the classical Yang-Baxter equation in the semidirect product Lie algebra and give rise to pre-Lie algebras \cite{Bai}.

The concept of a formal deformation of an algebraic structure began with the seminal
work of Gerstenhaber~\cite{Ge0,Ge} for associative
algebras. Nijenhuis and Richardson   extended this study to Lie algebras
~\cite{NR,NR2}. See \cite{GLST1} for more details about the deformation theories of various algebraic structures. More generally, deformation theory
for algebras over quadratic operads was developed by Balavoine~\cite{Bal}. For more
general operads we refer the reader to \cite{KSo,LV,Ma}, and the references therein. % Also seethe paper of Fox~\cite{Fo} for a categorical approach by triples and cotriples.
There is a well known slogan, often attributed to Deligne, Drinfeld and Kontsevich:  {\em every reasonable deformation theory is controlled by a differential graded   Lie algebra, determined up to quasi-isomorphism}. This slogan has been made into a rigorous theorem by Lurie  and Pridham, cf. \cite{Lu,Pr}, and a recent simple treatment in  \cite{GLST}.
It is also meaningful to deform {\em maps} compatible with given algebraic structures. Recently, the deformation theory of morphisms was   developed in \cite{Borisov,Fregier-Zambon-1,Fregier-Zambon-2} and the deformation theory of diagrams of algebras was studied in \cite{Barmeier,Fregier} using the minimal model of operads and the method of derived brackets \cite{Kosmann-Schwarzbach,Ma-0,Vo}.
Sometimes a differential graded  Lie algebra up to quasi-isomorphism  controlling a deformation theory manifests itself naturally as an \emph{$L_\infty$-algebra}. This often happens when one tries to deform several algebraic structures as well as a compatibility relation between them, such as  diagrams of algebras mentioned above.

A classical approach for studying a mathematical structure is associating invariants to it. Prominent among these are cohomological invariants, or simply cohomology, of various types of algebras. Cohomology controls deformations and extension problems of the corresponding algebraic structures. Cohomology theories of various kinds of algebras have been developed and studied in \cite{Ch-Ei,Ge0,Har,Hor}. More recently these classical constructions have been extended to strong homotopy (or infinity) versions of the algebras, cf. for example \cite{Hamilton-Lazarev}.

Homotopy invariant algebraic structures play a prominent role in modern mathematical physics.
Historically, the first such structure was that of an $A_\infty$-algebra introduced by Stasheff
in his study of based loop spaces~\cite{Sta63}.
Relevant later  developments include the work of   Lada and Stasheff~\cite{LS,stasheff:shla} about
$L_\infty$-algebras in mathematical physics
and the work of Chapoton and Livernet~\cite{CL} about
pre-Lie$_\infty$-algebras. Strong homotopy (or infinity-) versions of a large class of algebraic
structures were studied in the context of
operads in ~\cite{LV, MSS}.

Due to the importance of relative \RB Lie and associative algebras, the studies of corresponding deformation, cohomology and homotopy theories attract much interest recently. The first step toward such a study was given in \cite{TBGS-1}, where the deformation  and cohomology theories of relative \RB operators were established and applications were given to study deformations and cohomologies of skew-symmetric $r$-matrices. See also the survey article \cite{TBGS-s} for more details. Then in \cite{LST}, applying Voronov's higher derived brackets \cite{Vo}, the controlling algebra of relative \RB Lie algebras (a relative \RB Lie algebra consists of a Lie algebra $\g$, a representation of $\g$ on a vector space $V$ and a relative \RB operator $T:V\to\g$) was constructed, which turns out to be an $L_\infty$-algebra. Then using the twisting method via Maurer-Cartan elements given in \cite{Getzler}, one obtain a twisted $L_\infty$-algebra that governs simultaneous deformations of relative \RB Lie algebras. Using the $l_1$ in the twisted $L_\infty$-algebra, one can define the cohomology of relative \RB Lie algebras. Finally, one can define homotopy relative \RB operators via Maurer-Cartan characterization.  Voronov's higher derived brackets and the controlling algebras of homotopy relative \RB Lie algebras were studied more intrinsically in \cite{LST2} via the functorial approach. Note that the aforementioned relative \RB operators are of weight 0, and deformation, cohomology and homotopy theories of relative \RB operators and relative \RB Lie algebras of nonzero weights were further studied in \cite{CC,Das2108,Das2109,JSZ,TBGS-2}.

In the associative algebra context, deformations, cohomologies and homotopies of relative \RB associative algebras of weight 0 were studied in \cite{Das,DasM}. Deformations and cohomologies of relative \RB operators of nonzero weights were studied in \cite{Das2108}. Independently, deformations, cohomologies and homotopies of   \RB associative algebras of nonzero weights were studied in \cite{WZ}. In particular, it was shown in \cite{WZ} that  the operad governing homotopy \RB associative  algebras is a minimal model of the operad of \RB associative algebras. Note that due to the nonhomogeneous
relations, the operad of \RB algebras are not quadratic, and not covered by the Koszul duality theory. In \cite{DK}, Dotsenko and Khoroshkin gave a detailed study of the operad of  \RB associative algebras, and note that it is very difficult to give explicit formulas for differentials in the free resolutions. So it is still curious to give the homotopy theory of \RB algebras using the purely operadic approach.

 The paper is organized as follows. In Section \ref{sec:pre}, we recall the main tools which will be used frequently: the  Nijenhuis-Richardson bracket and higher derived brackets. In Section \ref{sec:def}, we survey the deformation theory of relative \RB Lie algebras. Given vector spaces $\g$ and $V$, first using Voronov's higher derived brackets  one obtains an $L_\infty$-algebra, whose Maurer-Cartan elements are relative \RB Lie algebra structures on $\g$ and $V$. Then given a relative \RB Lie algebra, applying the twisting theory via Maurer-Cartan elements, one obtains a twisted $L_\infty$-algebra governs deformations of the relative \RB Lie algebra.
   In Section \ref{sec:coh}, we survey the cohomology theory of relative \RB Lie algebras. Using the $l_1$ in the above twisted $L_\infty$-algebra, one can define the cohomology of relative \RB Lie algebras.
   Moreover,   there is a long exact sequence of cohomology groups linking the cohomology of \LR pairs introduced in \cite{Arnal}, the cohomology of $\huaO$-operators introduced in \cite{TBGS-1} and the cohomology of relative \RB Lie algebras. The above general framework has two important special cases: \RB Lie algebras and triangular Lie bialgebras. In Section \ref{sec:cohomologyRB}, one can apply the above general framework to introduce the cohomology of \RB Lie algebras. In Section \ref{sec:cohomologyTLB}, one can apply the above general framework to introduce the cohomology of triangular  Lie bialgebras.
In Section \ref{sec:hom}, we survey  homotopy relative \RB Lie algebras that obtained through the Maurer-Cartan approach. In Section \ref{sec:ass}, we briefly survey deformation, cohomology and homotopy theories of \RB Lie and associative algebras of nonzero weights.

\section{The Nijenhuis-Richardson bracket and higher derived brackets}\label{sec:pre}

In this section, we recall the Nijenhuis-Richardson bracket and higher derived brackets which are the main tools in  later sections.

\subsection{The Nijenhuis-Richardson bracket}

Let $\g$ be a vector space. For all $n\ge 0$, set $C^n(\g,\g):=\Hom(\wedge^{n+1}\g,\g).$ Consider the graded vector space
$
C^*(\g,\g)=\oplus_{n=0}^{+\infty}C^{n}(\mathfrak{g},\mathfrak{g})=\oplus_{n=0}^{+\infty}\Hom(\wedge^{n+1}\g,\g).
$
Then $C^*(\g,\g)$ equipped with the {\em Nijenhuis-Richardson bracket} \cite{NR,NR2}
\begin{eqnarray}\label{NR-bracket}
[P,Q]_{\NR}=P\bar{\circ}Q-(-1)^{pq}Q\bar{\circ} P,\,\,\,\,\forall P\in C^{p}(\mathfrak{g},\mathfrak{g}),
Q\in C^{q}(\mathfrak{g},\mathfrak{g}),
\end{eqnarray}
is a graded Lie algebra, where  $P\bar{\circ}Q\in C^{p+q}(\mathfrak{g},\mathfrak{g})$ is defined by
\begin{eqnarray}\label{NR-bracket-com}
(P\bar{\circ}Q)(x_{1},\cdots,x_{p+q+1})=\sum_{\sigma\in \mathbb S_{(q+1,p)}}(-1)^{\sigma}
P(Q(x_{\sigma(1)},\cdots,x_{\sigma(q+1)}),
x_{\sigma(q+2)},\cdots,x_{\sigma(p+q+1)}).
\end{eqnarray}
Here $\mathbb S_{(i,n-i)}$ denote the set of $(i,n-i)$-shuffles. Recall that a permutation $\sigma\in\mathbb S_n$ is called an {\em $(i,n-i)$-shuffle} if $\sigma(1)<\cdots <\sigma(i)$ and $\sigma(i+1)<\cdots <\sigma(n)$. If $i=0$ or $n$, we assume $\sigma=\Id$.  The notion of an $(i_1,\cdots,i_k)$-shuffle and the set $\mathbb S_{(i_1,\cdots,i_k)}$ are defined analogously.
 For $\mu\in C^{1}(\g,\g)=\Hom(\wedge^2\g,\g)$, we have
\begin{eqnarray*}
[\mu,\mu]_{\NR}(x,y,z)=2(\mu\bar{\circ}\mu)(x,y,z)
                      =2\Big(\mu(\mu(x,y),z)+\mu(\mu(y,z),x)+\mu(\mu(z,x),y)\Big).
\end{eqnarray*}
Thus, $\mu$ defines a Lie algebra structure on $\g$ if and only if $[\mu,\mu]_{\NR}=0$.
%{\color{red} MC elements need to be defined either here or in the `Notation' section.} \yh{Put it in the Notation section}
%Moreover, the {\em reducible Chevalley-Eilenberg complex} $(\oplus_{n=0}^{+\infty}C^n_\Lie(\g;\g),\dM)$ for a Lie algebra $\g$ with the coefficient in the adjoint representation is given as following:

Let $(\g,\mu)$ be a Lie algebra. Define the set of $0$-cochains $\frkC^0_\Lie(\g;\g)$ to be $0$, and define the set of $n$-cochains $\frkC^n_\Lie(\g;\g)$ to be
$$
\frkC^n_\Lie(\g;\g):=\Hom(\wedge^n\g,\g)=C^{n-1}(\g,\g),\quad n\geq 1.
$$
The {\em Chevalley-Eilenberg coboundary operator} $\dM_\CE$   of the Lie algebra $\g$ with coefficients in the adjoint representation  is defined by
\begin{eqnarray}\label{CE-cohomology-diff}
\dM_\CE f=(-1)^{n-1}[\mu,f]_{\NR},\quad \forall f\in \frkC^n_\Lie(\g;\g).
\end{eqnarray}
The resulting cohomology is denoted by $\huaH^*_{\Lie}(\g;\g)$.

Let $\g_1$ and $\g_2$ be two vector spaces and elements in $\g_1$ will be denoted by $x,y,z, x_i$ and elements in $\g_2$ will be denoted by $u,v,w,v_i$. %Let $c:\wedge^n\g_2\lon\g_1$ be a linear map. We define a linear map $\hat{c}\in C^{n-1}(\g_1\oplus\g_2,\g_1\oplus\g_2)$ by
%\begin{eqnarray*}\hat{c}\big((x_1,v_1),\cdots,(x_n,v_n)\big):=(c(v_1,\cdots,v_n),0).\end{eqnarray*}
For a   multilinear map $f:\wedge^{k}\g_1\otimes\wedge^{l}\g_2\lon\g_1$, we define  $\hat{f}\in C^{k+l-1}\big(\g_1\oplus\g_2,\g_1\oplus\g_2\big)$ by
\begin{eqnarray*}
\hat{f}\big((x_1,v_1),\cdots,(x_{k+l},v_{k+l})\big):=\sum_{\tau\in\mathbb S_{(k,l)}}(-1)^{\tau}\Big(f(x_{\tau(1)},\cdots,x_{\tau(k)},v_{\tau(k+1)},\cdots,v_{\tau(k+l)}),0\Big).
\end{eqnarray*}
Similarly, for $f:\wedge^{k}\g_1\otimes\wedge^{l}\g_2\lon\g_2$, we define   $\hat{f}\in C^{k+l-1}\big(\g_1\oplus\g_2,\g_1\oplus\g_2\big)$ by
\begin{eqnarray*}
\hat{f}\big((x_1,v_1),\cdots,(x_{k+l},v_{k+l})\big):=\sum_{\tau\in\mathbb S_{(k,l)}}(-1)^{\tau}\Big(0,f(x_{\tau(1)},\cdots,x_{\tau(k)},v_{\tau(k+1)},\cdots,v_{\tau(k+l)})\Big).
\end{eqnarray*}
The linear map $\hat{f}$ is called a {\em lift} of $f$.
Define $\g^{k,l}:=\wedge^k\g_1\otimes\wedge^{l}\g_2$.
%$$
%\g^{2,1}=(\g_1\otimes\g_1\otimes\g_2)\oplus(\g_1\otimes\g_2\otimes\g_1)\oplus(\g_2\otimes\g_1\otimes\g_1).
%$$
The vector space $\wedge^{n}(\g_1\oplus\g_2)$ is isomorphic to the direct sum of $\g^{k,l},~k+l=n$.

\begin{defi}
A linear map $f\in \Hom\big(\wedge^{k+l+1}(\g_1\oplus\g_2),\g_1\oplus\g_2\big)$ has a {\em bidegree} $k|l$, which is denoted by $||f||=k|l$,   if   $f$ satisfies the following two conditions:
\begin{itemize}
%\item[\rm(i)] $k+l+1=n;$
\item[\rm(i)] If $X\in\g^{k+1,l}$, then $f(X)\in\g_1$ and if $X\in\g^{k,l+1}$, then $f(X)\in\g_2;$
\item[\rm(ii)]
  In all the other cases $f(X)=0.$
\end{itemize}
We denote the set of homogeneous linear maps of bidegree $k|l$ by $C^{k|l}(\g_1\oplus\g_2,\g_1\oplus\g_2)$.
\end{defi}

It is clear that this gives a well-defined bigrading on the vector space $\Hom\big(\wedge^{k+l+1}(\g_1\oplus\g_2),\g_1\oplus\g_2\big)$.
%A linear map $f$ is said to be homogeneous  if $f$ has a well-defined bidegree.
We have $k+l\ge0,~k,l\ge-1$ because $k+l+1\ge1$ and $k+1,~l+1\ge0$.

The following lemmas are very important in  later studies.

\begin{lem}\label{important-lemma-2}
The Nijenhuis-Richardson bracket on $C^*(\g_1\oplus\g_2,\g_1\oplus\g_2)$ is compatible with the  bigrading. More precisely, if $||f||=k_f|l_f$, $||g||=k_g|l_g$, then $||[f,g]_{\NR}||=(k_f+k_g)|(l_f+l_g).$
\end{lem}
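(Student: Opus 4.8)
The plan is to reduce the statement to a pair of purely combinatorial bookkeeping facts about how the bidegree interacts with the composition operation $\bar\circ$, and then transport everything through the antisymmetrization defining $[\cdot,\cdot]_\NR$. Since $[f,g]_\NR = f\bar\circ g - (-1)^{|f||g|} g\bar\circ f$ is a signed difference of two $\bar\circ$-compositions, and the claimed bidegree $(k_f+k_g)|(l_f+l_g)$ is symmetric in $f$ and $g$, it suffices to prove that $||f\bar\circ g|| = (k_f+k_g)|(l_f+l_g)$ whenever $||f||=k_f|l_f$ and $||g||=k_g|l_g$. The bidegree of a difference of two maps of the same bidegree is again that bidegree (the set $C^{k|l}$ is a linear subspace), so once the single $\bar\circ$ claim is established the lemma follows immediately.

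To prove $||f\bar\circ g||=(k_f+k_g)|(l_f+l_g)$, I would verify the two defining conditions of bidegree directly from formula \eqref{NR-bracket-com}. First I would check total arity: $f\in C^{k_f+l_f}$ and $g\in C^{k_g+l_g}$, so $f\bar\circ g\in C^{(k_f+l_f)+(k_g+l_g)}=C^{(k_f+k_g)+(l_f+l_g)}$, which is the correct total degree $(k_f+k_g)+(l_f+l_g)$. For condition (i), I would feed $f\bar\circ g$ an argument $X\in\g^{p,q}$ with $p+q=(k_f+k_g)+(l_f+l_g)+1$ and track where each summand lands. In each shuffle term of \eqref{NR-bracket-com}, $g$ is first applied to $q_g+1$ of the arguments (where $g$ has arity $k_g+l_g$), producing an output in $\g_1$ or $\g_2$ according to its bidegree, and then $f$ is applied to that output together with the remaining arguments. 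The key point is that because the underlying space $\wedge^\bullet(\g_1\oplus\g_2)$ splits as $\bigoplus_{a+b=n}\g^{a,b}$, the shuffle sum only produces nonzero contributions when the $\g_1$-valued and $\g_2$-valued slots are distributed so that $g$ sees exactly the right number of each to be nonzero, and likewise for $f$; a bit of counting then shows the only surviving input types are $\g^{(k_f+k_g)+1,\,l_f+l_g}$ and $\g^{k_f+k_g,\,(l_f+l_g)+1}$, landing in $\g_1$ and $\g_2$ respectively, with all other input types annihilated. This is exactly condition (i), and condition (ii) is its contrapositive.

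The main obstacle, and the only genuinely delicate part, is the counting argument inside the shuffle sum. One must argue that the different ways of splitting the $\g_1$- and $\g_2$-arguments between the inner map $g$ and the outer map $f$ either all vanish (because one of $f$, $g$ is fed an input outside its support and returns $0$ by condition (ii)) or all feed into a single consistent output slot. Concretely, if $X\in\g^{a,b}$ with $a+b = (k_f+k_g)+(l_f+l_g)+1$, then for a term to survive, the $q_g+1$ arguments handed to $g$ must contain either $k_g+1$ elements of $\g_1$ and $l_g$ of $\g_2$ (so $g(\cdots)\in\g_1$) or $k_g$ of $\g_1$ and $l_g+1$ of $\g_2$ (so $g(\cdots)\in\g_2$); in either case the composite is nonzero only when the total count $(a,b)$ matches one of the two admissible profiles, forcing $a=k_f+k_g+1,\ b=l_f+l_g$ or $a=k_f+k_g,\ b=l_f+l_g+1$. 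I would organize this as a short case analysis on whether $g$'s output lies in $\g_1$ or $\g_2$ and then whether that output occupies a $\g_1$-slot or a $\g_2$-slot of $f$, checking in each case that the arithmetic of the $\g_1/\g_2$ tallies is consistent precisely for the two claimed input profiles. The signs from the shuffle permutations play no role in the bidegree computation and can be ignored for this purpose.
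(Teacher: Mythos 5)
Your proposal is correct and takes essentially the same approach as the paper, whose entire proof of this lemma is the one-line ``It follows from direct computation'': your reduction to the single claim $||f\bar{\circ}g||=(k_f+k_g)|(l_f+l_g)$ and the four-case tally of $\g_1$/$\g_2$ slots (inner map's output in $\g_1$ or $\g_2$, occupying a $\g_1$- or $\g_2$-slot of the outer map) is precisely that computation, carried out in full.
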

\begin{proof}
It follows from direct computation.
\end{proof}

 \begin{rmk}
  In   later studies, the subspaces $C^{k|0}(\g_1\oplus\g_2,\g_1\oplus\g_2)$ and $C^{-1|l}(\g_1\oplus\g_2,\g_1\oplus\g_2)$ will be frequently used. By the above lift map, one has the following isomorphisms:
  \begin{eqnarray}
   \label{rep-cochain-1} C^{k|0}(\g_1\oplus\g_2,\g_1\oplus\g_2)&\cong & \Hom(\wedge^{k+1}\g_1,\g_1)\oplus \Hom(\wedge^{k}\g_1\otimes \g_2,\g_2),\\
    \label{rep-cochain-2}C^{-1|l}(\g_1\oplus\g_2,\g_1\oplus\g_2)&\cong & \Hom(\wedge^{l}\g_2,\g_1).
  \end{eqnarray}
\end{rmk}

\begin{lem}\label{Zero-condition-2}
If $||f||=(-1)|k$ and $||g||=(-1)|l$, then $[f,g]_{\NR}=0.$ Consequently, $\oplus_{l=1}^{+\infty}  C^{-1|l}(\g_1\oplus\g_2,\g_1\oplus\g_2)$ is an abelian subalgebra of the graded Lie algebra $(C^*(\g_1\oplus \g_2,\g_1\oplus\g_2),[\cdot,\cdot]_\NR)$
\end{lem}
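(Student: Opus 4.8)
The plan is to deduce the vanishing from the bigrading bookkeeping already recorded in Lemma~\ref{important-lemma-2}, rather than from a hands-on expansion of the Nijenhuis--Richardson bracket. Recall that, via the isomorphism (\ref{rep-cochain-2}), an element of bidegree $(-1)|m$ lives in $\Hom(\wedge^m\g_2,\g_1)$: such a map is nonzero only on arguments drawn entirely from $\g_2$, and it then takes values in $\g_1$. The key observation is that the first entry $-1$ in the bidegree is already minimal, so combining two such maps under the bracket pushes the first entry below the admissible range.

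Concretely, I would first invoke Lemma~\ref{important-lemma-2} with $||f||=(-1)|k$ and $||g||=(-1)|l$ to conclude that $[f,g]_\NR$ is homogeneous of bidegree
\[
\bigl(k_f+k_g\bigr)\big|\bigl(l_f+l_g\bigr)=(-2)\big|(k+l).
\]
Then I would note that the admissibility constraint recorded just before the lemmas, namely that every nonzero homogeneous map has first bidegree entry at least $-1$, rules this out: a map of bidegree $(-2)|(k+l)$ can be nonzero only on $\g^{-1,k+l}$ or $\g^{-2,k+l+1}$, both of which vanish because $\wedge^{-1}\g_1=\wedge^{-2}\g_1=0$. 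Hence $[f,g]_\NR=0$. As an independent check one may expand directly: in $f\bar{\circ}g$ the inner map $g$ outputs an element of $\g_1$, which is then fed to $f$ as one of its arguments; but $f$ is nonzero only when all of its arguments come from $\g_2$, so every summand of $f\bar{\circ}g$ vanishes, and symmetrically for $g\bar{\circ}f$.

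For the consequence, I would observe that $\bigoplus_{l\ge 1}C^{-1|l}(\g_1\oplus\g_2,\g_1\oplus\g_2)$ is a graded subspace of $C^*(\g_1\oplus\g_2,\g_1\oplus\g_2)$ by construction, and that the first part shows the $\NR$-bracket of any two of its homogeneous elements vanishes; in particular the bracket remains inside the subspace, so it is a subalgebra, and it is abelian since the bracket is identically zero on it. I do not anticipate a genuine obstacle here: the only point requiring care is recognizing that bidegree $(-2)|(k+l)$ is inadmissible, which is precisely the elementary fact that negative exterior powers vanish; everything else is immediate from Lemma~\ref{important-lemma-2}.
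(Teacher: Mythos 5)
Your proof is correct and follows essentially the same route as the paper, which simply deduces the vanishing from Lemma~\ref{important-lemma-2}: the bracket would have bidegree $(-2)|(k+l)$, which is inadmissible since both entries of a bidegree must be at least $-1$. Your additional direct-expansion check and the explicit verification of the abelian-subalgebra consequence are sound elaborations of what the paper leaves implicit.
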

\begin{proof}
  It follows from Lemma \ref{important-lemma-2}.
\end{proof}

\subsection{$L_\infty$-algebras and higher derived brackets}

The notion of an $L_\infty$-algebra was introduced by Stasheff in \cite{stasheff:shla}. See  \cite{LS,LM} for more details.

Let $V=\oplus_{k\in\mathbb Z}V^k$ be a $\mathbb Z$-graded vector space. %\footnote{Here the vector space $V^k$ is finite dimensional, which is our setting in this paper.}.
We will denote by $\Sym(V)$ the {\em symmetric  algebra}  of $V$. That is,
$
\Sym(V):=\Ten(V)/I,
$
where $\Ten(V)$ is the tensor algebra and $I$ is the $2$-sided ideal of $\Ten(V)$ generated by all homogeneous elements  of the form
$
x\otimes y-(-1)^{xy}y\otimes x.
$
We will write $\Sym(V)=\oplus_{i=0}^{+\infty}\Sym^i (V)$. %where $S^i(V)\subset \otimes ^iV$.
Moreover, we denote the reduced symmetric  algebra by $\bar{\Sym}(V):=\oplus_{i=1}^{+\infty}\Sym^{i}(V)$. Denote the product of homogeneous elements $v_1,\cdots,v_n\in V$ in $\Sym^n(V)$ by $v_1\odot\cdots\odot v_n$. The degree of $v_1\odot\cdots\odot v_n$ is by definition the sum of the degrees of $v_i$. For a permutation $\sigma\in\mathbb S_n$ and $v_1,\cdots, v_n\in V$,  the {\em Koszul sign} $\varepsilon(\sigma)=\varepsilon(\sigma;v_1,\cdots,v_n)\in\{-1,1\}$ is defined by
\begin{eqnarray*}
	v_1\odot\cdots\odot v_n=\varepsilon(\sigma;v_1,\cdots,v_n)v_{\sigma(1)}\odot\cdots\odot v_{\sigma(n)}.
\end{eqnarray*}
%We will abuse the notation $\varepsilon(\sigma;v_1,\cdots,v_n)$ by writing $\varepsilon(\sigma)$.
The {\em desuspension operator} $s^{-1}$ changes the grading of $V$ according to the rule $(s^{-1}V)^i:=V^{i+1}$. The  degree $-1$ map $s^{-1}:V\lon s^{-1}V$ is defined by sending $v\in V$ to its   copy $s^{-1}v\in s^{-1}V$.
\begin{defi}
An {\em  $L_\infty$-algebra} is a $\mathbb Z$-graded vector space $\g=\oplus_{k\in\mathbb Z}\g^k$ equipped with a collection $(k\ge 1)$ of linear maps $l_k:\otimes^k\g\lon\g$ of degree $1$ with the property that, for any homogeneous elements $x_1,\cdots,x_n\in \g$, we have
\begin{itemize}\item[\rm(i)]
{\em (graded symmetry)} for every $\sigma\in\mathbb S_{n}$,
\begin{eqnarray*}
l_n(x_{\sigma(1)},\cdots,x_{\sigma(n-1)},x_{\sigma(n)})=\varepsilon(\sigma)l_n(x_1,\cdots,x_{n-1},x_n),
\end{eqnarray*}
\item[\rm(ii)] {\em (generalized Jacobi identity)} for all $n\ge 1$,
\begin{eqnarray*}\label{sh-Lie}
\sum_{i=1}^{n}\sum_{\sigma\in \mathbb S_{(i,n-i)} }\varepsilon(\sigma)l_{n-i+1}(l_i(x_{\sigma(1)},\cdots,x_{\sigma(i)}),x_{\sigma(i+1)},\cdots,x_{\sigma(n)})=0.
\end{eqnarray*}
\end{itemize}
\end{defi}

\begin{defi}
An element $\alpha\in \g^0$ is called a {\em Maurer-Cartan element}  of an $L_\infty$-algebra $\g$ if $\alpha$ satisfies the Maurer-Cartan equation
\begin{eqnarray}\label{MC-equation}
\sum_{k=1}^{+\infty}\frac{1}{k!}l_k(\alpha,\cdots,\alpha)=0.
\end{eqnarray}
\end{defi}

Let $\alpha$ be a Maurer-Cartan element. Define $l_k^{\alpha}:\otimes^k\g\lon\g$  $(k\ge 1)$ by
\begin{eqnarray}
l_k^{\alpha}(x_1,\cdots,x_k)=\sum_{n=0}^{+\infty}\frac{1}{n!}l_{k+n}(\underbrace{\alpha,\cdots,\alpha}_n,x_1,\cdots,x_k).
\end{eqnarray}

\begin{rmk}To ensure the convergence of the series appearing in the definition of Maurer-Cartan elements and Maurer-Cartan twistings above, one need the $L_\infty$-algebra being {filtered} given by Dolgushev and Rogers in \cite{Dolgushev-Rogers}, or weakly filtered given in \cite{LST}. Since all the 	$L_\infty$-algebras under consideration in the sequel satisfy the weakly filtered condition, so we will not mention this point anymore.
\end{rmk}

The following result is given by Getzler in \cite[Section 4]{Getzler}.

\begin{thm}\label{deformation-mc}
With the above notation, $(\g,\{l_k^{\alpha}\}_{k=1}^{+\infty})$ is an $L_\infty$-algebra, obtained from $\g$ by twisting with the Maurer-Cartan element $\alpha$. Moreover, $\alpha+\alpha'$ is a Maurer-Cartan element of   $(\g,\{l_k\}_{k=1}^{+\infty})$ if and only if $\alpha'$ is a Maurer-Cartan element of the twisted $L_\infty$-algebra $(\g,\{l_k^{\alpha}\}_{k=1}^{+\infty})$.
\end{thm}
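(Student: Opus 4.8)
The plan is to recast the statement in the language of coalgebras, where both assertions become transparent consequences of conjugating a square-zero codifferential by an automorphism. Under the convention of the excerpt (all $l_k$ of degree $+1$), an $L_\infty$-structure on $\g$ is the same datum as a degree $+1$ coderivation $D=\sum_{k\ge 1}D_k$ of the reduced symmetric coalgebra $\bar{\Sym}(\g)$ satisfying $D^2=0$, where $D_k$ is the unique coderivation whose corank-one component $\pr_\g\circ D_k|_{\Sym^k(\g)}$ equals $l_k$; the generalized Jacobi identities of condition (ii) are precisely the homogeneous components of $D^2=0$. For a degree-$0$ element $\alpha$ I will use the group-like exponential $e^{\alpha}:=\sum_{n\ge 0}\frac{1}{n!}\alpha^{\odot n}$ (working in the completion permitted by the weakly filtered hypothesis recalled in the Remark), and record the standard fact that the Maurer--Cartan equation \eqref{MC-equation} is equivalent to $\pr_\g\, D(e^{\alpha})=0$.

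For the first assertion I would introduce the translation automorphism $T_\alpha$ of $\bar{\Sym}(\g)$, the unique coalgebra automorphism acting on group-like elements by $T_\alpha(e^{\beta})=e^{\alpha+\beta}$, and set $D^{\alpha}:=T_{-\alpha}\circ D\circ T_{\alpha}$. Since conjugation by a coalgebra automorphism carries coderivations to coderivations and manifestly preserves the square-zero condition, $D^{\alpha}$ is again a degree $+1$ codifferential. The only substantive check is that the Taylor coefficients of $D^{\alpha}$ reproduce the maps displayed in the theorem: evaluating $\pr_\g\circ D^{\alpha}$ on $x_1\odot\cdots\odot x_k$ and tracking $\alpha$ through $T_{\pm\alpha}$ yields exactly $l_k^{\alpha}(x_1,\cdots,x_k)=\sum_{n\ge 0}\frac{1}{n!}l_{k+n}(\alpha,\cdots,\alpha,x_1,\cdots,x_k)$, with all Koszul signs trivial because $\alpha\in\g^0$. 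Finally, the curvature coefficient satisfies $l_0^{\alpha}=\pr_\g\, D^{\alpha}(1)=\pr_\g\, D(e^{\alpha})$, which vanishes precisely because $\alpha$ is a Maurer--Cartan element; hence $(\g,\{l_k^{\alpha}\}_{k\ge 1})$ is a genuine (flat) $L_\infty$-algebra.

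For the second assertion I would compute $D^{\alpha}(e^{\alpha'})=T_{-\alpha}\, D\, T_{\alpha}(e^{\alpha'})=T_{-\alpha}\, D(e^{\alpha+\alpha'})$, so that $\pr_\g\, D^{\alpha}(e^{\alpha'})=0$ if and only if $\pr_\g\, D(e^{\alpha+\alpha'})=0$; by the equivalence recorded above this says exactly that $\alpha'$ is Maurer--Cartan for $(\g,\{l_k^{\alpha}\})$ iff $\alpha+\alpha'$ is Maurer--Cartan for $(\g,\{l_k\})$. If one prefers to avoid the coalgebra formalism, the same identity follows from a direct reindexing: substituting the definition of $l_k^{\alpha}$ into $\sum_{k\ge 1}\frac{1}{k!}l_k^{\alpha}(\alpha',\cdots,\alpha')$ and collecting terms by total arity $m=k+n$ produces $\sum_{m\ge 1}\frac{1}{m!}\sum_{k=1}^{m}\binom{m}{k}l_m(\alpha,\cdots,\alpha,\alpha',\cdots,\alpha')$ (the $m$-ary term carrying $k$ entries equal to $\alpha'$), which is the binomial expansion of $\sum_{m\ge 1}\frac{1}{m!}l_m(\alpha+\alpha',\cdots,\alpha+\alpha')$ with the pure-$\alpha$ summand removed; that removed summand is the Maurer--Cartan equation for $\alpha$, hence zero.

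The main obstacle is the bookkeeping in the second paragraph: verifying that conjugating $D$ by $T_{\pm\alpha}$ and projecting onto $\g$ reproduces the precise combinatorial coefficients $\frac{1}{n!}$ and the correct arguments of $l_{k+n}$. This amounts to computing the action of the translation automorphism on a general symmetric word and confirming that only the component carrying all $n$ inserted copies of $\alpha$ survives the final projection. Should one instead pursue the direct route for the first assertion—checking the generalized Jacobi identity for $\{l_k^{\alpha}\}$ head-on—the obstacle migrates to the combinatorial reorganization of a triple sum, in which one must match shuffle terms and invoke both the original Jacobi identities and the Maurer--Cartan equation for $\alpha$ simultaneously; the coalgebra argument is preferable precisely because it packages this labor into the single trivial observation that conjugation preserves $D^2=0$.
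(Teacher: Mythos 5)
Your proposal is correct, but note that the paper itself contains no proof of this theorem to compare against: the result is simply imported from Getzler, with the text saying it ``is given by Getzler in \cite[Section 4]{Getzler}.'' So your coalgebraic argument is necessarily a different route, and it is a sound one: encoding the (shifted, all-degree-$+1$) $L_\infty$-structure as a square-zero degree-$+1$ coderivation $D$ and realizing the twist as the conjugate $D^{\alpha}=T_{-\alpha}\circ D\circ T_{\alpha}$ reduces both assertions to the observations that conjugation preserves coderivations and $D^2=0$, and that $T_{\alpha}(e^{\alpha'})=e^{\alpha+\alpha'}$; your direct reindexing argument for the second claim is exactly the computation this formalism hides. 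Two small repairs: first, $T_{\alpha}$ is not an automorphism of the \emph{reduced} coalgebra $\bar{\Sym}(\g)$ as you wrote, since $e^{\beta}$ and the element $1$ on which you evaluate $D^{\alpha}$ have a $\Sym^0$-component, so the whole argument must run on the full completed symmetric coalgebra (completion being licensed, as you note, by the weakly filtered hypothesis), with $T_{\alpha}$ given by multiplication by the group-like $e^{\alpha}$. Second, in computing $\pr_\g D^{\alpha}(x_1\odot\cdots\odot x_k)$ and $\pr_\g D^{\alpha}(e^{\alpha'})$ you silently drop the $\Sym^0$-correction coming from multiplication by $e^{-\alpha}$; this is legitimate only because $D$ has no arity-zero part, so the image of $D$ has vanishing $\Sym^0$-component, and that step should be stated. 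With these points made explicit, your identities $l_0^{\alpha}=\pr_\g D(e^{\alpha})$ (vanishing precisely by the Maurer-Cartan equation for $\alpha$) and $D^{\alpha}(e^{\alpha'})=T_{-\alpha}D(e^{\alpha+\alpha'})$ deliver both claims, and the approach arguably buys more than the citation does: it gives a self-contained proof whose combinatorics are concentrated in a single, checkable coefficient computation.
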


One method for constructing explicit $L_\infty$-algebras is given by Voronov's higher derived brackets \cite{Vo}. Let us recall this construction.

\begin{defi}
A $V$-data consists of a quadruple $(L,H,P,\Delta)$ where
\begin{itemize}
\item[$\bullet$] $(L,[\cdot,\cdot])$ is a graded Lie algebra,
\item[$\bullet$] $H$ is an abelian graded Lie subalgebra of $(L,[\cdot,\cdot])$,
\item[$\bullet$] $P:L\lon L$ is a projection, that is $P\circ P=P$, whose image is $H$ and kernel is a  graded Lie subalgebra of $(L,[\cdot,\cdot])$,
\item[$\bullet$] $\Delta$ is an element in $  \ker(P)^1$ such that $[\Delta,\Delta]=0$.
\end{itemize}
\end{defi}

%There is also an $L_\infty$-algebra structure on a bigger space, which is used to study simultaneous deformations of morphisms between Lie algebras in \cite{Barmeier,Fregier-Zambon-1,Fregier-Zambon-2}.

\begin{thm}{\rm (\cite{Vo,Fregier-Zambon-1})}\label{thm:db-big-homotopy-lie-algebra}
Let $(L,H,P,\Delta)$ be a $V$-data. Then the graded vector space $s^{-1}L\oplus H$  is an $L_\infty$-algebra, where nontrivial $l_k$ are given by
\begin{eqnarray}
\nonumber l_1(s^{-1}x,a)&=&(-s^{-1}[\Delta,x],P(x+[\Delta,a])),\\
\nonumber l_2(s^{-1}x,s^{-1}y)&=&(-1)^xs^{-1}[x,y],\\
\nonumber l_k(s^{-1}x,a_1,\cdots,a_{k-1})&=&P[\cdots[[x,a_1],a_2]\cdots,a_{k-1}],\quad k\geq 2,\\
\label{V-shla}l_k(a_1,\cdots,a_{k-1},a_k)&=&P[\cdots[[\Delta,a_1],a_2]\cdots,a_{k}],\quad k\geq 2.
\end{eqnarray}
Here $a,a_1,\cdots,a_k$ are homogeneous elements of $H$ and $x,y$ are homogeneous elements of $L$.

Moreover, if $L'$ is a graded Lie subalgebra of $L$ that satisfies $[\Delta,L']\subset L'$, then $s^{-1}L'\oplus H$ is an $L_\infty$-subalgebra of the above $L_\infty$-algebra $(s^{-1}L\oplus H,\{l_k\}_{k=1}^{+\infty})$.
\end{thm}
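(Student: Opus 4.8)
The plan is to treat the two assertions separately, since the first (that $s^{-1}L\oplus H$ carries an $L_\infty$-structure) is the substantive content of Voronov's construction, while the last sentence (the subalgebra claim) is a closure check. For the first assertion I would encode the family $\{l_k\}_{k\ge1}$ as a single degree $1$ coderivation $D=\sum_{k\ge1}l_k$ of the reduced symmetric coalgebra $\bar{\Sym}(s^{-1}L\oplus H)$. In the paper's convention, where every $l_k$ has degree $1$, the graded symmetry condition (i) is precisely what lets the $l_k$ define a coderivation on the \emph{symmetric} coalgebra, and the generalized Jacobi identity (ii) is equivalent to $D\circ D=0$. So the whole first assertion reduces to checking graded symmetry of each $l_k$ together with $D^2=0$.

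First I would verify graded symmetry. The only nonobvious point is symmetry of the iterated brackets $P[\cdots[[\Delta,a_1],a_2]\cdots,a_k]$ and $P[\cdots[[x,a_1],a_2]\cdots,a_{k-1}]$ in the $H$-arguments $a_i$. Because $H$ is abelian, $[a_i,a_{i+1}]=0$, so the graded Jacobi identity in $L$ gives $[[Y,a_i],a_{i+1}]=(-1)^{a_ia_{i+1}}[[Y,a_{i+1}],a_i]$ for any $Y$; iterating this adjacent-transposition argument yields full graded symmetry in the $a_i$ up to the Koszul sign $\varepsilon(\sigma)$, as required.

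Then comes the core computation, $D^2=0$, equivalently the identities (ii). I would organize the verification by the number of $s^{-1}L$-arguments occurring, of which there are at most two (only $l_2$ accepts two elements of $L$ with nonzero value). The three resulting families of identities are powered by exactly four inputs: the graded Jacobi identity of $L$, the hypothesis $[\Delta,\Delta]=0$, the projection property $P^2=P$ together with $\ker P$ being a graded Lie subalgebra (so that $P$ annihilates brackets of two kernel elements after suitable rewriting), and the abelianness of $H$. Concretely, the all-$H$ identities reproduce Voronov's derived-bracket $L_\infty$-relations on $H$ and consume $[\Delta,\Delta]=0$; the one-$L$-argument identities use that $\ker P$ is a subalgebra; and the two-$L$-argument identities are the graded Jacobi identity of $L$ transported through $s^{-1}$. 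I expect the genuine obstacle to be none of the algebra but the sign and shuffle bookkeeping: reconciling the Koszul signs $\varepsilon(\sigma)$ and the $(i,n-i)$-shuffle sums in (ii) with the signs produced by the desuspension $s^{-1}$ and by moving the odd element $\Delta$ past the $a_i$. I would manage this by working entirely in the coderivation picture, where the signs are bookkept automatically, rather than expanding (ii) by hand.

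For the final (subalgebra) assertion the argument is short and I would give it in full. It suffices to show that each $l_k$ maps $s^{-1}L'\oplus H$ into itself, since the $L_\infty$-relations are then inherited by restriction. Inspecting the formulas, $l_k$ for $k\ge3$, as well as $l_2(s^{-1}x,a_1)=P[x,a_1]$ and $l_2(a_1,a_2)=P[[\Delta,a_1],a_2]$, all land in $H\subset s^{-1}L'\oplus H$, so these impose no condition. The only brackets with a possible $s^{-1}L$-component in their output are $l_1(s^{-1}x,a)=(-s^{-1}[\Delta,x],P(x+[\Delta,a]))$ and $l_2(s^{-1}x,s^{-1}y)=(-1)^xs^{-1}[x,y]$. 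For $x,y\in L'$ the hypothesis $[\Delta,L']\subset L'$ gives $[\Delta,x]\in L'$, so the first component of $l_1$ stays in $s^{-1}L'$, while $L'$ being a graded Lie subalgebra gives $[x,y]\in L'$, so the output of $l_2$ stays in $s^{-1}L'$. Hence all brackets preserve $s^{-1}L'\oplus H$, which is therefore an $L_\infty$-subalgebra.
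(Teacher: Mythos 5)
You should first be aware that the paper itself does not prove Theorem \ref{thm:db-big-homotopy-lie-algebra}: it is quoted from \cite{Vo,Fregier-Zambon-1}, so the only meaningful comparison is with the argument in those sources (Voronov's direct verification of the higher-derived-bracket relations, extended by Fr\'egier--Zambon). Measured against that, your framing is sound and the two easy parts are done correctly. The reduction of conditions (i)+(ii) to a single degree-$1$ coderivation $D$ on $\bar{\Sym}(s^{-1}L\oplus H)$ with $D^2=0$ is standard (it is essentially the paper's own Theorem \ref{graded-Nijenhuis-Richardson-bracket}); the graded-symmetry check via $[a_i,a_j]=0$ plus the graded Jacobi identity is correct; and your proof of the ``Moreover'' statement is complete and is exactly the expected one: the only brackets whose output has an $s^{-1}L$-component are $l_1$ and $l_2$ on two $s^{-1}L$-entries, and these stay in $s^{-1}L'$ precisely because $[\Delta,L']\subset L'$ and $[L',L']\subset L'$, while the $L_\infty$-identities are inherited by restriction.

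The gap is in the main assertion: the generalized Jacobi identities, which are the entire content of the theorem, are organized but never verified. Writing that ``the all-$H$ identities reproduce Voronov's derived-bracket $L_\infty$-relations on $H$'' is circular in a blind proof, since those relations are exactly Voronov's theorem and require a genuine computation (handling the shuffle sums, the identity $P[x,y]=P[Px,y]+P[x,Py]$ forced by $\ker P$ being a subalgebra together with $H$ abelian, and $[\Delta,\Delta]=0$); retreating to the coderivation picture ``where signs are bookkept automatically'' does not remove this computation, it only repackages it. Moreover, your case enumeration is off in two concrete places. First, identities whose inputs contain three elements of $s^{-1}L$ and no elements of $H$ do \emph{not} vanish termwise: they are precisely the graded Jacobi identity of $L$ transported through $s^{-1}$, so there are more than three families to check (and one must also verify that all identities with at least three $L$-inputs and at least one $H$-input, or at least four $L$-inputs, vanish term by term). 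Second, the two-$L$-input identities are \emph{not} just the Jacobi identity of $L$: their $H$-components need the projection identity above; for instance, for $n=2$ the $H$-part of $l_1(l_2(s^{-1}x,s^{-1}y))+l_2(l_1(s^{-1}x),s^{-1}y)\pm l_2(l_1(s^{-1}y),s^{-1}x)=0$ amounts to $P[x,y]=P[Px,y]+P[x,Py]$, which uses that $\ker P$ is a subalgebra and that $[Px,Py]=0$. So the skeleton and the closing argument are right, but for the first assertion what you have is a plan rather than a proof.
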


\section{Deformations of relative Rota-Baxter Lie algebras}\label{sec:def}

In this section, first we use Voronov's higher derived brackets to construct the $L_\infty$-algebra whose Maurer-Cartan elements are relative \RB Lie algebra structures. Then using the twisting method, one obtains the $L_\infty$-algebra that controls simultaneous deformations of relative \RB Lie algebras.

\begin{defi}
  A {\em \LR pair}, denoted by $(\g,\mu;\rho)$, consists of a Lie algebra  $(\g,\mu=[\cdot,\cdot]_\g)$  and a representation $\rho:\g\longrightarrow\gl(V)$   of $\g$ on a vector space $V$.
\end{defi}

Note that $\mu+\rho\in C^{1|0}(\g\oplus V,\g\oplus V)$. Moreover, the fact that $\mu$ is a Lie bracket and $\rho$ is a representation is equivalent to that
$$
[\mu+\rho,\mu+\rho]_\NR=0.
$$

We now recall the notion of a relative \RB operator.

\begin{defi} \label{defi:O}
\begin{enumerate}
\item[\rm(i)] A linear operator $T:\g\longrightarrow \g$ on a Lie algebra $\g$ is called a {\em \RB operator} if
\begin{equation} [T(x),T(y)]_\g=T\big([T(x),y]_\g+ [x,T(y)]_\g \big), \quad \forall x, y \in \g.
\label{eq:rbo}
\end{equation}
Moreover, a Lie algebra $(\g,[\cdot,\cdot]_\g)$ with a \RB operator $T$ is
called a {\em \RB Lie algebra}, which is denoted   by $(\g,[\cdot,\cdot]_\g,T)$.
\item[\rm(ii)] A {\em relative \RB Lie algebra} is a triple $((\g,[\cdot,\cdot]_\g),\rho,T)$, where $(\g,[\cdot,\cdot]_\g;\rho)$ is a \LR pair and $T:V\longrightarrow\g$ is a  {\em relative \RB operator}, i.e.
 \begin{equation}
   [Tu,Tv]_\g=T\big(\rho(Tu)(v)-\rho(Tv)(u)\big),\quad\forall u,v\in V.
 \mlabel{eq:defiO}
 \end{equation}

\end{enumerate}
\end{defi}

Note that a \RB operator on a Lie algebra is a relative \RB operator with respect to the adjoint representation.

Let $\g$ and $V$ be two vector spaces. Then one has a graded Lie algebra $(\oplus_{n=0}^{+\infty}C^{n}(\g\oplus V,\g\oplus V),[\cdot,\cdot]_{\NR})$. This graded Lie algebra gives rise to a V-data, and an $L_\infty$-algebra naturally.
\begin{pro}\label{pro:VdataL}
One has a $V$-data $(L,H,P,\Delta)$ as follows:
\begin{itemize}
\item[$\bullet$] the graded Lie algebra $(L,[\cdot,\cdot])$ is given by $\big(\oplus_{n=0}^{+\infty}C^{n}(\g\oplus V,\g\oplus V),[\cdot,\cdot]_{\NR}\big);$
\item[$\bullet$] the abelian graded Lie subalgebra $H$ is given by
\begin{equation}\label{defi:h}
H:=\oplus_{n=0}^{+\infty}C^{-1|(n+1)}(\g\oplus V,\g\oplus V)=\oplus_{n=0}^{+\infty}\Hom(\wedge^{n+1}V,\g);
\end{equation}
\item[$\bullet$] $P:L\lon L$ is the projection onto the subspace $H;$
\item[$\bullet$] $\Delta=0$.
\end{itemize}

Consequently, one obtains an $L_\infty$-algebra $(s^{-1}L\oplus H,\{l_k\}_{k=1}^{+\infty})$, where $l_k$ are given by
\begin{eqnarray*}
l_1(s^{-1}Q,\theta)&=&P(Q),\\
  l_2(s^{-1}Q,s^{-1}Q')&=&(-1)^Qs^{-1}[Q,Q']_{\NR}, \\
l_k(s^{-1}Q,\theta_1,\cdots,\theta_{k-1})&=&P[\cdots[Q,\theta_1]_{\NR},\cdots,\theta_{k-1}]_{\NR},
\end{eqnarray*}
for homogeneous elements   $\theta,\theta_1,\cdots,\theta_{k-1}\in H$, homogeneous elements  $Q,Q'\in L$ and all the other possible combinations vanish.
\end{pro}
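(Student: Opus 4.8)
The plan is to verify that the quadruple $(L,H,P,\Delta)$ exhibited in the statement genuinely satisfies the four axioms of a $V$-data, and then to invoke Theorem \ref{thm:db-big-homotopy-lie-algebra} to produce the $L_\infty$-algebra, simplifying the resulting formulas in the special case $\Delta=0$. First I would confirm that $L=\big(\oplus_{n}C^n(\g\oplus V,\g\oplus V),[\cdot,\cdot]_\NR\big)$ is a graded Lie algebra, which is precisely the Nijenhuis-Richardson structure recalled in Section \ref{sec:pre}. Next I would check that $H=\oplus_n C^{-1|(n+1)}(\g\oplus V,\g\oplus V)$ is an abelian graded Lie subalgebra: this is exactly the content of Lemma \ref{Zero-condition-2}, since every element of $H$ has bidegree of the form $(-1)|l$, and the lemma asserts that the $\NR$-bracket of two such elements vanishes. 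The identification $C^{-1|(n+1)}\cong\Hom(\wedge^{n+1}V,\g)$ in \eqref{defi:h} is the isomorphism \eqref{rep-cochain-2}.

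The substantive verifications concern $P$ and $\Delta$. I would take $P$ to be the projection onto $H$ determined by the bigrading, i.e. $P$ kills every homogeneous component of bidegree $\neq (-1)|l$. Then $P\circ P=P$ and $\Img P=H$ are immediate. The key point to check is that $\ker P$ is a graded Lie subalgebra of $L$. Using Lemma \ref{important-lemma-2}, the bracket respects the bigrading additively, so a bracket $[f,g]_\NR$ of two elements lands in bidegree $(k_f+k_g)|(l_f+l_g)$; I would argue that if neither $f$ nor $g$ has a component of bidegree $(-1)|l$ (i.e. both lie in $\ker P$), then no bracket component can have first index exactly $-1$, using the constraint $k\ge -1$ for each bidegree together with the additivity. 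This is the main obstacle: one must rule out cancellations producing a $(-1)|\ast$ component, which requires care because $\ker P$ is the sum of all bidegrees with $k\ge 0$ (together with the $k=-1$, $l=0$ piece, which is empty since $l\ge 1$ there), and one must see that $k_f,k_g\ge 0$ forces $k_f+k_g\ge 0$. Since $\Delta=0$ trivially lies in $\ker(P)^1$ and satisfies $[\Delta,\Delta]=0$, the final axiom is automatic.

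Having established the $V$-data, the $L_\infty$-structure on $s^{-1}L\oplus H$ follows directly from Theorem \ref{thm:db-big-homotopy-lie-algebra}. It remains only to specialize the four formulas \eqref{V-shla} to $\Delta=0$. Setting $\Delta=0$ annihilates the bracket $[\Delta,x]$ and $[\Delta,a]$ in $l_1$, leaving $l_1(s^{-1}Q,\theta)=(0,P(Q))$, which the proposition records as $P(Q)$; it also kills the purely-$H$ bracket $l_k(a_1,\cdots,a_k)=P[\cdots[[\Delta,a_1],a_2]\cdots,a_k]$ entirely, consistent with the assertion that all other combinations vanish. The remaining formulas $l_2(s^{-1}Q,s^{-1}Q')=(-1)^Q s^{-1}[Q,Q']_\NR$ and $l_k(s^{-1}Q,\theta_1,\cdots,\theta_{k-1})=P[\cdots[Q,\theta_1]_\NR,\cdots,\theta_{k-1}]_\NR$ are the unchanged $\Delta$-independent brackets. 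I expect these final substitutions to be routine; the only genuine work is the $\ker P$ subalgebra verification, which I would carry out purely at the level of bidegrees via Lemma \ref{important-lemma-2}.
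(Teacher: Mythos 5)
Your proposal is correct and follows essentially the same route as the paper: abelianness of $H$ via Lemma \ref{Zero-condition-2}, the projection axioms for $P$ with $\ker P$ closed under the bracket by the bidegree additivity of Lemma \ref{important-lemma-2}, the trivial check for $\Delta=0$, and then specialization of the formulas in Theorem \ref{thm:db-big-homotopy-lie-algebra}. The only difference is that you spell out the $\ker P$ verification (which the paper dismisses as "straightforward"), and your bidegree argument $k_f,k_g\ge 0\Rightarrow k_f+k_g\ge 0$ is exactly the right justification.
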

\begin{proof}
 By Lemma \ref{Zero-condition-2},    $H$ is an abelian subalgebra of $(L,[\cdot,\cdot])$.

   Since $P$ is the projection onto $H$, it is obvious that $P\circ P=P$. It is also straightforward to see that the kernel  of $P$ is a  graded Lie subalgebra of $(L,[\cdot,\cdot])$. Thus $(L,H,P,\Delta=0)$ is a V-data.

   The other conclusions follows immediately from Theorem \ref{thm:db-big-homotopy-lie-algebra}.
\end{proof}
By Lemma \ref{important-lemma-2}, one obtains that
\begin{equation}\label{defi:Lprime}
L'=\oplus_{n=0}^{+\infty}C^{n|0}(\g\oplus V,\g\oplus V),\quad\mbox{where}\quad C^{n|0}(\g\oplus V,\g\oplus V)=\Hom(\wedge^{n+1}\g,\g)\oplus\Hom(\wedge^{n}\g\otimes V,V)
 \end{equation}is a graded Lie subalgebra of $\big(\oplus_{n=0}^{+\infty}C^{n}(\g\oplus V,\g\oplus V),[\cdot,\cdot]_{\NR}\big)$.

\begin{cor}\label{cor:Linfty}
 With the above notation,  $(s^{-1}L'\oplus H,\{l_i\}_{i=1}^{+\infty})$ is an $L_\infty $-algebra, where $l_k$ are given by
 \begin{eqnarray*}
  l_2(s^{-1}Q,s^{-1}Q')&=&(-1)^Qs^{-1}[Q,Q']_{\NR}, \\
l_k(s^{-1}Q,\theta_1,\cdots,\theta_{k-1})&=&P[\cdots[Q,\theta_1]_{\NR},\cdots,\theta_{k-1}]_{\NR},
\end{eqnarray*}
for homogeneous elements   $\theta_1,\cdots,\theta_{k-1}\in H$, homogeneous elements  $Q,Q'\in L'$, and all the other possible combinations vanish.

\end{cor}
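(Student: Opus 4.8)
The plan is to obtain this as an immediate application of the \emph{Moreover} clause of Theorem~\ref{thm:db-big-homotopy-lie-algebra}. That clause says that whenever $L'$ is a graded Lie subalgebra of $L$ satisfying $[\Delta,L']\subset L'$, the graded vector space $s^{-1}L'\oplus H$ is an $L_\infty$-subalgebra of the $L_\infty$-algebra $(s^{-1}L\oplus H,\{l_k\}_{k=1}^{+\infty})$ produced in Proposition~\ref{pro:VdataL}. So the whole proof reduces to checking the two hypotheses of that clause for the specific $L'$ appearing in \eqref{defi:Lprime}, and then reading off the restricted brackets.

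First I would verify that $L'=\oplus_{n=0}^{+\infty}C^{n|0}(\g\oplus V,\g\oplus V)$ is closed under the Nijenhuis-Richardson bracket. This is precisely where Lemma~\ref{important-lemma-2} is used: if $\|f\|=k_f|0$ and $\|g\|=k_g|0$, then $\|[f,g]_{\NR}\|=(k_f+k_g)|0$, so a bracket of two elements of bidegree of the form $\bullet|0$ again has bidegree $\bullet|0$ and therefore lands back in $L'$. Hence $L'$ is a graded Lie subalgebra of $(L,[\cdot,\cdot]_{\NR})$ --- this is exactly the assertion recorded in the text just before the corollary.

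Next I would dispose of the second hypothesis, which is free of charge: in the $V$-data of Proposition~\ref{pro:VdataL} one has $\Delta=0$, so $[\Delta,L']=\{0\}\subset L'$ holds trivially. With both hypotheses established, Theorem~\ref{thm:db-big-homotopy-lie-algebra} at once gives that $s^{-1}L'\oplus H$ is an $L_\infty$-subalgebra, hence in particular an $L_\infty$-algebra.

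Finally I would confirm that the surviving structure maps are exactly the ones displayed. Since $\Delta=0$, every term containing $[\Delta,-]$ in the formulas of Theorem~\ref{thm:db-big-homotopy-lie-algebra} vanishes: the component $-s^{-1}[\Delta,x]$ of $l_1$ disappears and the entire $H$-only family $l_k(a_1,\cdots,a_k)=P[\cdots[[\Delta,a_1],a_2]\cdots,a_k]$ drops out, leaving only $l_2(s^{-1}Q,s^{-1}Q')=(-1)^Qs^{-1}[Q,Q']_{\NR}$ and the mixed brackets $l_k(s^{-1}Q,\theta_1,\cdots,\theta_{k-1})=P[\cdots[Q,\theta_1]_{\NR},\cdots,\theta_{k-1}]_{\NR}$, now with $Q,Q'$ ranging over $L'$. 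I do not expect a genuine obstacle anywhere: the only substantive input is the bidegree additivity of the Nijenhuis-Richardson bracket, and Lemma~\ref{important-lemma-2} supplies it outright, so the corollary is a one-line consequence of the structural theorem once this closure is noted.
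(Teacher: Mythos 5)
Your overall route is exactly the paper's: Lemma \ref{important-lemma-2} shows $L'$ is closed under $[\cdot,\cdot]_{\NR}$, the hypothesis $[\Delta,L']\subset L'$ is vacuous because $\Delta=0$, and the corollary is then the \emph{Moreover} clause of Theorem \ref{thm:db-big-homotopy-lie-algebra} applied to the $V$-data of Proposition \ref{pro:VdataL}. That part is fine and is all the paper itself relies on.

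There is, however, one slip in your final read-off of the structure maps. You claim that once the $[\Delta,-]$ terms are discarded, $l_1$ disappears entirely. That is not what setting $\Delta=0$ gives: the unary bracket of Theorem \ref{thm:db-big-homotopy-lie-algebra} becomes $l_1(s^{-1}x,a)=(0,P(x))$, and indeed Proposition \ref{pro:VdataL} explicitly records the surviving operation $l_1(s^{-1}Q,\theta)=P(Q)$ on $s^{-1}L\oplus H$, which is nonzero in general --- so your reasoning, taken literally, would contradict that proposition. The reason $l_1$ genuinely vanishes on the subalgebra $s^{-1}L'\oplus H$, and hence the reason it is legitimately absent from the list in Corollary \ref{cor:Linfty}, is that $L'\subset\ker P$: by \eqref{defi:Lprime} the elements of $L'$ have bidegree $n|0$, while $H$ in \eqref{defi:h} consists of the bidegree $-1|(n+1)$ components, so the projection $P$ onto $H$ annihilates every $Q\in L'$. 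This is a one-line fix, but as written your argument conflates two distinct vanishing mechanisms ($\Delta=0$ versus $P|_{L'}=0$), and without the latter the assertion that ``all the other possible combinations vanish'' is not justified.
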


Now we are ready to formulate the main result in this section.

\begin{thm}\label{deformation-rota-baxter}
  Let $\g$ and $V$ be two vector spaces,  $\mu\in\Hom(\wedge^2\g,\g),~\rho\in\Hom(\g\otimes V,V)$ and $T\in\Hom(V,\g)$. Then $((\g,\mu),\rho,T)$ is a relative \RB Lie algebra if and only if  $(s^{-1}\pi,T)$ is a Maurer-Cartan element of the $L_\infty$-algebra $(s^{-1}L'\oplus H,\{l_i\}_{i=1}^{+\infty})$ given in Corollary \ref{cor:Linfty}, where $\pi=\mu+\rho\in C^{1|0}(\g\oplus V,\g\oplus V)$.
\end{thm}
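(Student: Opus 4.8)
The plan is to expand the Maurer-Cartan equation
\[
\sum_{k=1}^{+\infty}\frac{1}{k!}\,l_k\big((s^{-1}\pi,T),\cdots,(s^{-1}\pi,T)\big)=0
\]
for $\alpha:=(s^{-1}\pi,T)$ and to split it, using the bigrading of Lemma~\ref{important-lemma-2}, into its two homogeneous components. First I would check that $\alpha$ is legitimately a degree-$0$ element of $s^{-1}L'\oplus H$: since $\pi\in C^{1|0}\subset C^{1}$, the desuspension $s^{-1}\pi$ has degree $0$, and $T\in\Hom(V,\g)=C^{-1|1}\subset C^{0}$ has degree $0$ in $H$. Consequently $\alpha$ is even, every Koszul sign produced when the graded-symmetric map $l_k$ is evaluated on $k$ copies of $\alpha$ equals $+1$, and the expansion of $l_k(\alpha,\cdots,\alpha)$ is governed purely by the count of slots filled by $s^{-1}\pi$ versus by $T$.

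Next I would discard the inessential terms using $\Delta=0$ and the explicit $l_k$ of Corollary~\ref{cor:Linfty}. The map $l_1$ vanishes on $\alpha$ because $l_1(s^{-1}\pi,T)=(0,P\pi)=0$, the projection $P$ onto $H=\oplus_{n}C^{-1|(n+1)}$ annihilating $\pi\in C^{1|0}$. Because $\Delta=0$, every bracket with all inputs in $H$ is zero, so in particular $l_k(T,\cdots,T)=0$; and for $k\ge 3$ any evaluation with two or more $s^{-1}\pi$-inputs falls under the ``other combinations vanish'' clause. Hence only two shapes survive: the $s^{-1}L'$-valued term $l_2(s^{-1}\pi,s^{-1}\pi)=-s^{-1}[\pi,\pi]_{\NR}$, and the $H$-valued terms $l_k(s^{-1}\pi,T,\cdots,T)=P[\cdots[\pi,T]_{\NR},\cdots,T]_{\NR}$ (one $s^{-1}\pi$ and $k-1$ copies of $T$). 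As $s^{-1}L'$ and $H$ are complementary summands, the Maurer-Cartan equation is equivalent to the simultaneous vanishing of its $s^{-1}L'$-part and its $H$-part.

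The $s^{-1}L'$-part is contributed only by $k=2$ and reads $-\tfrac12\,s^{-1}[\pi,\pi]_{\NR}=0$, i.e. $[\pi,\pi]_{\NR}=0$; by the characterization recorded just after the definition of a \LR pair, this is exactly the assertion that $(\g,\mu;\rho)$ is a \LR pair. For the $H$-part I would invoke Lemma~\ref{important-lemma-2} to compute the bidegree of the iterated bracket: with $j$ copies of $T$, the element $[\cdots[\pi,T]_{\NR},\cdots,T]_{\NR}$ has bidegree $(1-j)|j$, and $P$ kills it unless $1-j=-1$, i.e. $j=2$. Therefore the whole $H$-sum collapses to the single $k=3$ contribution $\tfrac12[[\pi,T]_{\NR},T]_{\NR}\in C^{-1|2}=\Hom(\wedge^{2}V,\g)$, and the $H$-part of the Maurer-Cartan equation reduces to $[[\pi,T]_{\NR},T]_{\NR}=0$.

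It then remains to match this single equation with \eqref{eq:defiO}. For this I would evaluate the double Nijenhuis-Richardson bracket on a pair $u,v\in V$ directly from \eqref{NR-bracket-com} and the lift conventions, splitting $\pi=\mu+\rho$ into its $\Hom(\wedge^{2}\g,\g)$ and $\Hom(\g\otimes V,V)$ parts; the outcome is a nonzero multiple of $[Tu,Tv]_{\g}-T(\rho(Tu)v-\rho(Tv)u)$, so that $[[\pi,T]_{\NR},T]_{\NR}=0$ holds precisely when $T$ is a relative \RB operator. Combining the two components gives the stated equivalence. I expect this last bracket computation to be the only real obstacle: it is the sole genuinely calculational step and demands careful tracking of shuffle signs and of the $\mu$/$\rho$ splitting, while everything preceding it is bookkeeping dictated by the bigrading.
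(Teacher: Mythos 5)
Your proposal is correct and follows essentially the same route as the paper's proof: expand the Maurer--Cartan equation via Corollary \ref{cor:Linfty}, use the bigrading (Lemmas \ref{important-lemma-2} and \ref{Zero-condition-2}) to see that only the $k=2$ and $k=3$ terms survive, and reduce everything to $[\pi,\pi]_{\NR}=0$ and $[[\pi,T]_{\NR},T]_{\NR}=0$. The only difference is cosmetic: you additionally sketch the evaluation of $[[\pi,T]_{\NR},T]_{\NR}$ on $u,v\in V$ (correctly predicting a nonzero multiple of the Rota--Baxter identity), a step the paper asserts without computation.
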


\begin{proof}
  Let $(s^{-1}\pi,T)$ be a Maurer-Cartan element of $(s^{-1}L'\oplus H,\{l_i\}_{i=1}^{+\infty})$. By Lemma \ref{important-lemma-2} and Lemma \ref{Zero-condition-2}, we have
\begin{eqnarray*}
||[\pi,T]_{\NR}||=0|1,\quad ||[[\pi,T]_{\NR},T]_{\NR}||=-1|2,\quad [[[\pi,T]_{\NR},T]_{\NR},T]_{\NR}=0.
\end{eqnarray*}
Then, by Corollary \ref{cor:Linfty}, we have
\begin{eqnarray*}
(0,0)&=&\sum_{k=1}^{+\infty}\frac{1}{k!}l_k\Big((s^{-1}\pi,T),\cdots,(s^{-1}\pi,T)\Big)\\
&=&\frac{1}{2!}l_2\Big((s^{-1}\pi,T),(s^{-1}\pi,T)\Big)+\frac{1}{3!}l_3\Big((s^{-1}\pi,T),(s^{-1}\pi,T),(s^{-1}\pi,T)\Big)\\
&=&\Big(-s^{-1}\frac{1}{2}[\pi,\pi]_{\NR},\frac{1}{2}[[\pi,T]_{\NR},T]_{\NR}\Big).
\end{eqnarray*}
Thus, we obtain $
 ~[\pi,\pi]_{\NR}=0$ and $
[[\pi,T]_{\NR},T]_{\NR}=0,
$
 which implies that  $(\g,\mu)$ is a Lie algebra, $(V;\rho)$ is its representation and   $T$ is a relative \RB operator on  the Lie algebra $(\g,\mu)$ with respect to the representation $(V;\rho)$.
\end{proof}

Let $((\g,\mu),\rho,T)$ be a relative \RB Lie algebra. Denote by $\pi=\mu+\rho\in C^{1|0}(\g\oplus V,\g\oplus V)$. By Theorem \ref{deformation-rota-baxter}, we obtain that $(s^{-1}\pi,T)$ is a Maurer-Cartan element of the $L_\infty$-algebra  $(s^{-1}L'\oplus H,\{l_i\}_{i=1}^{+\infty})$ given in Corollary \ref{cor:Linfty}.
 Now  we are ready to give the
  $L_\infty$-algebra that controls deformations of the relative \RB Lie algebra.

\begin{thm}\label{thm:Simultaneous-deformation}
With the above notation, one has the twisted $L_\infty$-algebra $\big(s^{-1}L'\oplus H,\{l_k^{(s^{-1}\pi,T)}\}_{k=1}^{+\infty}\big)$ associated to a relative \RB Lie algebra $((\g,\mu),\rho,T)$, where $\pi=\mu+\rho$.

Moreover, for
linear maps $T'\in\Hom(V,\g)$, $\mu'\in\Hom(\wedge^2\g,\g)$ and $\rho'\in\Hom(\g,\gl(V))$, the triple $((\g,\mu+\mu'),\rho+\rho',T+T')$ is again a relative \RB Lie algebra if and only if $\big(s^{-1}(\mu'+\rho'),T'\big)$ is a Maurer-Cartan element of the twisted $L_\infty$-algebra $\big(s^{-1}L'\oplus H,\{l_k^{(s^{-1}\pi,T)}\}_{k=1}^{+\infty}\big)$.
\end{thm}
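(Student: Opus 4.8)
The plan is to obtain both assertions as formal consequences of the two twisting results already recorded, namely Theorem \ref{deformation-rota-baxter} and Getzler's Theorem \ref{deformation-mc}. The starting observation is that, by Theorem \ref{deformation-rota-baxter}, the element $\alpha:=(s^{-1}\pi,T)$ is a Maurer-Cartan element of the $L_\infty$-algebra $(s^{-1}L'\oplus H,\{l_i\}_{i=1}^{+\infty})$ of Corollary \ref{cor:Linfty}. The first assertion is then immediate: Theorem \ref{deformation-mc} guarantees that twisting by any Maurer-Cartan element produces a new $L_\infty$-algebra, so $(s^{-1}L'\oplus H,\{l_k^{(s^{-1}\pi,T)}\}_{k=1}^{+\infty})$ is indeed an $L_\infty$-algebra. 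The convergence of the defining series is covered by the weakly filtered condition noted in the remark following Theorem \ref{deformation-mc}.

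For the second assertion, first I would set $\alpha':=(s^{-1}(\mu'+\rho'),T')$ and check that $\alpha'$ lies in degree $0$ of $s^{-1}L'\oplus H$, so that the Maurer-Cartan equation makes sense: since $\mu'+\rho'\in C^{1|0}$ sits in degree $1$ of $L'$, its desuspension has degree $0$, while $T'\in\Hom(V,\g)=C^{-1|1}\subset C^0$ already lies in degree $0$ of $H$. Next I would record the elementary fact that addition in the direct sum $s^{-1}L'\oplus H$ is componentwise, and that $\pi+\mu'+\rho'$ again belongs to $C^{1|0}$, so that
\[
\alpha+\alpha'=\big(s^{-1}(\pi+\mu'+\rho'),\,T+T'\big)=\big(s^{-1}((\mu+\mu')+(\rho+\rho')),\,T+T'\big).
\]

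With this in hand, the equivalence is obtained by chaining the two theorems. Applying Theorem \ref{deformation-rota-baxter} to the data $\mu+\mu'$, $\rho+\rho'$, $T+T'$ shows that $((\g,\mu+\mu'),\rho+\rho',T+T')$ is a relative \RB Lie algebra if and only if $\alpha+\alpha'$ is a Maurer-Cartan element of the untwisted $L_\infty$-algebra. On the other hand, Theorem \ref{deformation-mc} asserts precisely that $\alpha+\alpha'$ is a Maurer-Cartan element of the untwisted algebra if and only if $\alpha'$ is a Maurer-Cartan element of the twisted algebra $(s^{-1}L'\oplus H,\{l_k^{(s^{-1}\pi,T)}\}_{k=1}^{+\infty})$. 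Combining these two equivalences yields the claim.

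The argument is essentially formal, so there is no deep obstacle; the only points demanding care are bookkeeping ones. I expect the main thing to verify is that the degree and desuspension conventions are consistent---in particular that $\alpha'$ genuinely lands in degree $0$---and that the deformation component $\mu'+\rho'$ is correctly matched so that $\pi+(\mu'+\rho')=(\mu+\mu')+(\rho+\rho')$ holds inside $C^{1|0}$. Once these identifications are fixed, both Theorem \ref{deformation-rota-baxter} and Theorem \ref{deformation-mc} apply verbatim and the proof is complete.
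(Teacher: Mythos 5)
Your proof is correct and takes essentially the same route as the paper's own proof: both obtain the result by combining Theorem \ref{deformation-rota-baxter} (the Maurer-Cartan characterization of relative Rota-Baxter Lie algebras) with Getzler's twisting result, Theorem \ref{deformation-mc}. Your write-up is in fact slightly more complete, since you chain the two equivalences in both directions and verify the degree and summation bookkeeping, whereas the paper records only one implication explicitly.
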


\begin{proof}
If $((\g,\mu+\mu'),\rho+\rho',T+T')$ is   a relative \RB Lie algebra, then
by Theorem \ref{deformation-rota-baxter},  $(s^{-1}(\mu+\mu'+\rho+\rho'),T+T')$ is a Maurer-Cartan
element of the $L_\infty$-algebra given in Corollary \ref{cor:Linfty}. Moreover, by Theorem \ref{deformation-mc}, $(s^{-1}(\mu'+\rho'),T')$ is a Maurer-Cartan
element of the $L_\infty$-algebra $\big(s^{-1}L'\oplus H,\{l_k^{(s^{-1}\pi,T)}\}_{k=1}^{+\infty}\big)$.
\end{proof}

\begin{rmk}
  The above $L_\infty$-algebra controlling deformations of relative \RB Lie algebras is an extension of the differential graded Lie algebra controlling deformations of \LR pairs by the differential graded Lie algebra controlling deformations of relative \RB operators. See \cite[Theorem 3.16]{LST} for more details.
\end{rmk}

\section{Cohomologies of relative Rota-Baxter Lie algebras}\label{sec:coh}

In this section, we survey the cohomology of relative \RB Lie algebras. In particular, one can define  the cohomology of \RB Lie algebras and the cohomology of triangular Lie bialgebras using this general framework.
%\subsection{Cohomology of relative Rota-Baxter Lie algebras}\label{sec:cohomologyRRB}

One can define the cohomology of a relative \RB Lie algebra using the twisted $L_\infty$-algebra given in Theorem \ref{thm:Simultaneous-deformation}.

Let   $((\g,\mu),\rho,T)$ be a relative \RB Lie algebra. Define the set of  $0$-cochains $\frkC^0(\g,\rho,T)$ to be $0$, and define
the set of  $1$-cochains $\frkC^1(\g,\rho,T)$ to be $\gl(\g)\oplus \gl(V)$. For $n\geq 2$, define the space of    $n$-cochains $\frkC^n(\g,\rho,T)$ by
\begin{eqnarray*}
\frkC^n(\g,\rho,T)&:=&\frkC^n(\g,\rho)\oplus \frkC^n(T)=C^{(n-1)|0}(\g\oplus V,\g\oplus V)\oplus C^{-1|(n-1)}(\g\oplus V,\g\oplus V)\\
&=&\Big(\Hom(\wedge^n\g,\g)\oplus \Hom(\wedge^{n-1}\g\otimes V,V)\Big)\oplus\Hom(\wedge^{n-1}V,\g).
\end{eqnarray*}

Define the {\em coboundary operator} $\huaD:\frkC^n(\g,\rho,T)\lon \frkC^{n+1}(\g,\rho,T)$ by
\begin{equation}\label{cohomology-of-RB}
  \huaD(f,\theta)=(-1)^{n-2}\big(-[\pi,f]_{\NR},[[\pi,T]_{\NR},\theta]_{\NR}+\frac{1}{n!}\underbrace{[\cdots[[}_nf,T]_{\NR},T]_{\NR},\cdots,T]_{\NR}\big),
\end{equation}
where $\pi=\mu+\rho, ~f\in\Hom(\wedge^n\g,\g)\oplus \Hom(\wedge^{n-1}\g\otimes V,V)$ and $\theta\in \Hom(\wedge^{n-1}V,\g).$

\begin{thm}\label{cohomology-of-relative-RB}
  With the above notation,  $(\oplus _{n=0}^{+\infty}\frkC^n(\g,\rho,T),\huaD)$ is a cochain complex, i.e. $\huaD\circ \huaD=0.$
\end{thm}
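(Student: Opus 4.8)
The plan is to recognize the operator $\huaD$ as, up to an overall sign, the unary bracket $l_1^{(s^{-1}\pi,T)}$ of the twisted $L_\infty$-algebra of Theorem \ref{thm:Simultaneous-deformation}, and then to exploit that the unary bracket of any $L_\infty$-algebra squares to zero. Since $((\g,\mu),\rho,T)$ is a relative \RB Lie algebra, Theorem \ref{deformation-rota-baxter} tells us that $(s^{-1}\pi,T)$, with $\pi=\mu+\rho$, is a Maurer-Cartan element of the $L_\infty$-algebra of Corollary \ref{cor:Linfty}; hence by Theorem \ref{deformation-mc} the twisted brackets $\{l_k^{(s^{-1}\pi,T)}\}$ form an $L_\infty$-algebra, and its generalized Jacobi identity for $n=1$ gives $l_1^{(s^{-1}\pi,T)}\circ l_1^{(s^{-1}\pi,T)}=0$.

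The first computational step I would carry out is to evaluate $l_1^{(s^{-1}\pi,T)}(s^{-1}f,\theta)$ for a cochain $(f,\theta)\in\frkC^n(\g,\rho,T)$, expanding the defining series $l_1^{\alpha}(\cdot)=\sum_{m\ge0}\tfrac{1}{m!}l_{1+m}(\alpha,\dots,\alpha,\cdot)$ with $\alpha=(s^{-1}\pi,T)$ by means of the explicit brackets in Corollary \ref{cor:Linfty}. Writing each $\alpha$ as the sum of its $L'$-component $s^{-1}\pi$ and its $H$-component $T$, and $(s^{-1}f,\theta)$ as the sum of $s^{-1}f$ and $\theta$, only three families of terms can be nonzero: the binary bracket $l_2(s^{-1}\pi,s^{-1}f)=-s^{-1}[\pi,f]_\NR$, landing in the $s^{-1}L'$-summand; the iterated brackets $P[\dots[f,T]_\NR,\dots,T]_\NR$ of $f$ with several copies of $T$; and the iterated brackets of $\pi$ with copies of $T$ and one $\theta$, where a combinatorial factor $m$ cancels one factorial. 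Both of the latter land in the $H$-summand.

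The hard part will be showing that the a priori infinite series for $l_1^{\alpha}$ collapses to the finite formula \eqref{cohomology-of-RB}; this is governed entirely by bidegree bookkeeping. By Lemma \ref{important-lemma-2} the iterated bracket of $f\in C^{(n-1)|0}$ with $m$ copies of $T\in C^{-1|1}$ has bidegree $(n-1-m)|m$, and since $P$ projects onto $H=\oplus_k C^{-1|(k+1)}$ it annihilates every such term except the one with $m=n$, which reproduces the term $\tfrac{1}{n!}[\dots[f,T]_\NR,\dots,T]_\NR$. Likewise the strand carrying $\pi$ survives $P$ only when exactly two bracketings occur, producing the single term $[[\pi,T]_\NR,\theta]_\NR$; all higher brackets are projected to zero. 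Matching these surviving terms against \eqref{cohomology-of-RB} via the identifications \eqref{rep-cochain-1} and \eqref{rep-cochain-2}, and noting that $(f,\theta)\in\frkC^n$ corresponds to the degree $n-2$ element $(s^{-1}f,\theta)$ of $s^{-1}L'\oplus H$, shows $\huaD=(-1)^{n-2}l_1^{(s^{-1}\pi,T)}$ on $\frkC^n$.

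To conclude, I would assemble the signs. On $\frkC^n$ we have $\huaD=(-1)^{n-2}l_1^{(s^{-1}\pi,T)}$ and on $\frkC^{n+1}$ we have $\huaD=(-1)^{n-1}l_1^{(s^{-1}\pi,T)}$, so $\huaD\circ\huaD=(-1)^{2n-3}\big(l_1^{(s^{-1}\pi,T)}\big)^2=-\big(l_1^{(s^{-1}\pi,T)}\big)^2$, which vanishes by the first paragraph. Therefore $\huaD\circ\huaD=0$ and $(\oplus_{n=0}^{+\infty}\frkC^n(\g,\rho,T),\huaD)$ is a cochain complex.
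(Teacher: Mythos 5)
Your proof is correct and takes essentially the same route as the paper: recognize $\huaD$ on $\frkC^n(\g,\rho,T)$ as $(-1)^{n-2}l_1^{(s^{-1}\pi,T)}$ for the twisted $L_\infty$-algebra of Theorem \ref{thm:Simultaneous-deformation}, and conclude from the fact that the unary bracket of an $L_\infty$-algebra squares to zero. The only difference is that you verify the identification explicitly via the bidegree bookkeeping of Lemmas \ref{important-lemma-2} and \ref{Zero-condition-2}, a computation the paper treats as immediate from \eqref{cohomology-of-RB}.
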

\begin{proof}
By Theorem \ref{thm:Simultaneous-deformation}, $\big(s^{-1}L'\oplus H,\{l_k^{(s^{-1}\pi,T)}\}_{k=1}^{+\infty}\big)$ is an $L_\infty$-algebra, where $\pi=\mu+\rho$, $H$ and $L'$ are given by \eqref{defi:h} and \eqref{defi:Lprime} respectively. For any $(f,\theta)\in\frkC^n(\g,\rho,T)$, one has $(s^{-1}f,\theta)\in (s^{-1}L'\oplus H)^{n-2}$. By \eqref{cohomology-of-RB}, one deduces that
$$
\huaD(f,\theta)=(-1)^{n-2} l_1^{(s^{-1}\pi,T)}(s^{-1}f,\theta).
$$
Thus, $(\oplus _{n=0}^{+\infty}\frkC^n(\g,\rho,T),\huaD)$ is a cochain complex.
\end{proof}

\begin{defi}
  The cohomology of the cochain complex $(\oplus _{n=0}^{+\infty}\frkC^n(\g,\rho,T),\huaD)$ is called the {\em cohomology  of the relative \RB Lie algebra} $((\g,\mu),\rho,T)$. We denote its $n$-th cohomology group by $\huaH^n(\g,\rho,T)$.
\end{defi}

Define a linear operator $h_T:\frkC^n(\g,\rho)\lon \frkC^{n+1}(T)$ by
\begin{eqnarray}\label{key-cohomology-T-abstract}
h_Tf:=(-1)^{n-2} \frac{1}{n!}\underbrace{[\cdots[[}_nf,T]_{\NR},T]_{\NR},\cdots,T]_{\NR}.
\end{eqnarray}
More precisely, \begin{eqnarray}\label{key-cohomology-T}
&&\nonumber( h_Tf)(v_1,\cdots,v_n)\\
&=&(-1)^{n}f_\g(Tv_1,\cdots,Tv_n)+\sum_{i=1}^{n}(-1)^{i+1}Tf_V\big(Tv_1,\cdots,Tv_{i-1},Tv_{i+1},\cdots,Tv_n,v_i\big),
\end{eqnarray}
where $f=(f_\g,f_V),$ and $f_\g\in \Hom(\wedge^n\g,\g),~f_V\in \Hom(\wedge^{n-1}\g\otimes V,V)$ and $v_1,\cdots,v_n\in V.$

By \eqref{cohomology-of-RB} and \eqref{key-cohomology-T-abstract}, the coboundary operator can be written as
\begin{equation}\label{eq:Dexplicit}
\huaD(f,\theta)=(\partial f,\delta \theta+h_Tf),
\end{equation}
where $\partial:\frkC^n(\g,\rho)\lon \frkC^{n+1}(\g,\rho)$ is given by
\begin{equation}\label{defi:coboundary-rep}
  \partial f:=(-1)^{n-1}[ {\mu}+ {\rho}, {f}]_{\NR}.
\end{equation}
and $\delta:\frkC^n(T)\to \frkC^{n+1}(T) $ is given by
\begin{eqnarray}\label{eq:odiff}
   \delta \theta=(-1)^{n}[[\pi,T]_{\NR},\theta]_{\NR}.
\end{eqnarray}

The formula of the coboundary operator $\huaD$ can be well-explained by the following diagram:
 \[
\small{ \xymatrix{
\cdots
\longrightarrow \frkC^n(\g,\rho)\ar[dr]^{h_T} \ar[r]^{\qquad\partial} & \frkC^{n+1}(\g,\rho) \ar[dr]^{h_T} \ar[r]^{\partial\qquad}  & \frkC^{n+2}(\g,\rho)\longrightarrow\cdots  \\
\cdots\longrightarrow \frkC^n(T) \ar[r]^{\qquad\delta} &\frkC^{n+1}(T)\ar[r]^{\delta\qquad}&\frkC^{n+2}(T)\longrightarrow \cdots.}
}
\]

Since $\huaD^2=0$, it follows that $\partial^2=0$ and $\delta^2=0$. Therefore, one has two cochain complexes $(\oplus_{n=0}^{+\infty}\frkC^n(\g,\rho),\partial)$ and $(\oplus_{n=0}^{+\infty}\frkC^n(T),\delta)$, whose cohomology are denoted by $ \huaH^*(\g,\rho)$ and  $\huaH^*(T)$ respectively.

\begin{thm}\label{cohomology-exact}
Let $((\g,\mu),\rho,T)$ be a relative \RB Lie algebra. Then there is a short exact sequence of the  cochain complexes:
$$
0\longrightarrow(\oplus_{n=0}^{+\infty}\frkC^n(T),\delta)\stackrel{\iota}{\longrightarrow}(\oplus _{n=0}^{+\infty}\frkC^n(\g,\rho,T),\huaD)\stackrel{p}{\longrightarrow} (\oplus_{n=0}^{+\infty}\frkC^n(\g,\rho),\partial)\longrightarrow 0,
$$
where $\iota$ and $p$ are the inclusion map and the projection map.

Consequently, there is a long exact sequence of the  cohomology groups:
$$
\cdots\longrightarrow\huaH^n(T)\stackrel{\huaH^n(\iota)}{\longrightarrow}\huaH^n(\g,\rho,T)\stackrel{\huaH^n(p)}{\longrightarrow} \huaH^n(\g,\rho)\stackrel{c^n}\longrightarrow \huaH^{n+1}(T)\longrightarrow\cdots,
$$
where the connecting map $c^n$ is defined by
$
c^n([\alpha])=[h_T\alpha],$  for all $[\alpha]\in \huaH^n(\g,\rho).$
\end{thm}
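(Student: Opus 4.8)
The plan is to establish the short exact sequence of cochain complexes first, and then invoke the standard homological algebra machinery to derive the long exact sequence. The key structural observation is the explicit formula \eqref{eq:Dexplicit}, namely $\huaD(f,\theta)=(\partial f,\delta\theta+h_Tf)$, together with the decomposition $\frkC^n(\g,\rho,T)=\frkC^n(\g,\rho)\oplus\frkC^n(T)$. First I would verify that the inclusion $\iota:\frkC^n(T)\lon\frkC^n(\g,\rho,T)$, sending $\theta\mapsto(0,\theta)$, and the projection $p:\frkC^n(\g,\rho,T)\lon\frkC^n(\g,\rho)$, sending $(f,\theta)\mapsto f$, are chain maps. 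For $\iota$ this amounts to checking $\huaD(0,\theta)=(0,\delta\theta)=\iota(\delta\theta)$, which is immediate from \eqref{eq:Dexplicit} since $h_T$ is applied only to the $\frkC^n(\g,\rho)$-component. For $p$ one computes $p\,\huaD(f,\theta)=\partial f=\partial\,p(f,\theta)$, again immediate. Exactness of the sequence at each spot is essentially formal: $\iota$ is injective, $p$ is surjective, and $\ker p=\Img\iota$ is precisely the subspace of pairs $(0,\theta)$. This establishes the short exact sequence.

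For the long exact sequence, I would apply the snake lemma (or the standard construction of the connecting homomorphism from a short exact sequence of cochain complexes). The only point requiring genuine attention is identifying the connecting map $c^n$ explicitly and verifying it agrees with $[\alpha]\mapsto[h_T\alpha]$. Following the usual zig-zag construction: given a cocycle $\alpha\in\frkC^n(\g,\rho)$ with $\partial\alpha=0$, lift it to $(\alpha,0)\in\frkC^n(\g,\rho,T)$ via any preimage under $p$; apply $\huaD$ to get $\huaD(\alpha,0)=(\partial\alpha,h_T\alpha)=(0,h_T\alpha)$ using $\partial\alpha=0$; then this lies in the image of $\iota$, and pulling back along $\iota$ yields the class $[h_T\alpha]\in\huaH^{n+1}(T)$. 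Hence $c^n([\alpha])=[h_T\alpha]$, exactly as claimed.

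The main obstacle, though it is more bookkeeping than conceptual, will be confirming that all the maps are well-defined on cohomology and that the degree conventions match up. Specifically I would want to make sure that the sign-twisted identity $\partial^2=0$, $\delta^2=0$ (which the excerpt already deduces from $\huaD^2=0$) interacts correctly with the off-diagonal term, i.e. that the relation $\huaD^2=0$ forces the compatibility $\delta\circ h_T + h_T\circ\partial=0$. This last identity is precisely what guarantees that $h_T$ descends to a well-defined map on cohomology $\huaH^n(\g,\rho)\lon\huaH^{n+1}(T)$: if $\alpha$ is a $\partial$-cocycle then $\delta(h_T\alpha)=-h_T(\partial\alpha)=0$, so $h_T\alpha$ is a $\delta$-cocycle, and if $\alpha=\partial\beta$ is a coboundary then $h_T\alpha=h_T\partial\beta=-\delta(h_T\beta)$ is a $\delta$-coboundary. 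I would therefore extract this intertwining relation from expanding $\huaD^2(f,\theta)=0$ componentwise as the crucial lemma, after which both the well-definedness of $c^n$ and the exactness of the long sequence follow from the general snake-lemma formalism without further computation.
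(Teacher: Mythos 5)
Your proposal is correct and takes essentially the same route as the paper: the paper's own (very terse) proof likewise reads the short exact sequence directly off the componentwise formula $\huaD(f,\theta)=(\partial f,\delta\theta+h_Tf)$ and obtains the long exact sequence together with $c^n([\alpha])=[h_T\alpha]$ from the standard connecting-homomorphism construction. Your explicit extraction of the intertwining relation $\delta\circ h_T+h_T\circ\partial=0$ from $\huaD^2=0$ is a detail the paper leaves implicit, but it is the same argument.
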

\begin{proof}
 By  \eqref{eq:Dexplicit}, one has the short exact sequence  of cochain complexes which induces a long exact sequence of cohomology groups.   Also by \eqref{eq:Dexplicit},   $c^n$ is given by
$
c^n([\alpha])=[h_T\alpha].$
\end{proof}

\begin{rmk}
  The cohomology of the cochain complex $(\oplus_{n=0}^{+\infty}\frkC^n(T),\delta)$ is taken to be the cohomology of the relative \RB operator $T$ \cite{TBGS-1}, and the cohomology of the cochain complex $(\oplus_{n=0}^{+\infty}\frkC^n(\g,\rho),\partial)$ is taken to be the cohomology of the  \LR  pair $(\g,\mu;\rho)$  \cite{Arnal}. So the above result establishes the relationship between the cohomology groups of relative \RB Lie algebras and the cohomology groups of the underlying relative \RB operators and \LR pairs.
\end{rmk}

\begin{rmk}
   In the associative algebra context, deformations, cohomologies and homotopies of relative \RB operators on associative algebras and relative \RB associative algebras were studied in \cite{Das} and \cite{DasM} respectively.
\end{rmk}

\subsection{Cohomology of Rota-Baxter Lie algebras}\label{sec:cohomologyRB}

%\subsection{Cohomologies of Rota-Baxter algebras}
In this subsection,  we survey the cohomology of \RB Lie algebras, which is defined with the help of the general framework of the cohomology of relative \RB Lie algebras.

Let $(\g,[\cdot,\cdot]_\g,T)$ be a \RB Lie algebra. Define the set of $0$-cochains $\frkC^0_{{\rm RB}}(\g,T)$ to be $0$, and define
the set of $1$-cochains $\frkC^1_{{\rm RB}}(\g,T)$ to be $\frkC^1_{{\rm RB}}(\g,T):=\Hom( \g,\g)$. For $n\geq2$, define the space of   $n$-cochains $\frkC^n_{{\rm RB}}(\g,T)$ by
\begin{eqnarray*}
\frkC^n_{\rm RB}(\g,T):=\frkC^n_\Lie(\g;\g)\oplus \frkC^n(T) =\Hom(\wedge^n\g,\g)\oplus\Hom(\wedge^{n-1}\g,\g).
\end{eqnarray*}

Define the embedding $\frki:\frkC^n_{\rm RB}(\g,T)\lon \frkC^n(\g,\ad,T)$ by
$$
\frki(f,\theta)=(f,f,\theta),\quad \forall f\in \Hom(\wedge^n\g,\g), \theta\in\Hom(\wedge^{n-1}\g,\g).
$$
Denote by $\Img^n(\frki)=\frki(\frkC^n_{\rm RB}(\g,T))$. Then   $(\oplus_{n=0}^{+\infty}\Img^n(\frki),\huaD)$ is a subcomplex of the cochain complex $(\oplus _{n=0}^{+\infty}\frkC^n(\g,\ad,T),\huaD)$ associated to the relative \RB Lie algebra $((\g,[\cdot,\cdot]_\g),\ad,T)$.

Define the projection $\frkp:\Img^n(\frki)\lon \frkC^n_{\rm RB}(\g,T)$ by
$$
\frkp(f,f,\theta)=(f,\theta), \quad \forall f  \in \Hom(\wedge^n\g,\g), \theta\in\Hom(\wedge^{n-1}\g,\g).
$$
Then for $n\geq0,$  define $\huaD_{\rm RB}:\frkC^n_{\rm RB}(\g,T)\lon \frkC^{n+1}_{\rm RB}(\g,T)$ by
$
\huaD_{\rm RB}=\frkp\circ \huaD\circ \frki.
$
More precisely,
\begin{eqnarray}\label{eq:dRB}
\huaD_{\rm RB}(f,\theta)=\Big({\dM_\CE} f, \delta \theta +\Omega f\Big),\quad \forall f\in \Hom(\wedge^n\g,\g),~\theta\in \Hom(\wedge^{n-1}\g,\g),
\end{eqnarray}
where  $\delta$ is given by \eqref{eq:odiff} and $\Omega:\Hom(\wedge^n\g,\g)\lon \Hom(\wedge^n\g,\g)$ is defined  by
\begin{eqnarray*}
(\Omega f)(x_1,\cdots,x_n)=(-1)^{n}\Big(f(Tx_1,\cdots,Tx_n)-\sum_{i=1}^{n}Tf(Tx_1,\cdots,Tx_{i-1},x_i,Tx_{i+1},\cdots,Tx_n)\Big).
\end{eqnarray*}
\begin{thm}
The map $\huaD_{\rm RB}$ is a coboundary operator, i.e. $\huaD_{\rm RB}\circ\huaD_{\rm RB}=0$.
\end{thm}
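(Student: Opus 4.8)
The plan is to derive $\huaD_{\rm RB}\circ\huaD_{\rm RB}=0$ from the identity $\huaD\circ\huaD=0$ of Theorem \ref{cohomology-of-relative-RB} by exhibiting $\huaD_{\rm RB}$ as the transport of $\huaD$ along the pair of maps $\frki$ and $\frkp$. By construction $\frkp\circ\frki=\Id$ on $\oplus_n\frkC^n_{\rm RB}(\g,T)$, and $\frki\circ\frkp=\Id$ on $\oplus_n\Img^n(\frki)$, so $\frki$ identifies the graded space $\oplus_n\frkC^n_{\rm RB}(\g,T)$ isomorphically with $\oplus_n\Img^n(\frki)$, and $\huaD_{\rm RB}=\frkp\circ\huaD\circ\frki$ is just this isomorphism applied to the restriction of $\huaD$ to the image of $\frki$.

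The crux is the assertion, recorded just before the theorem, that $\oplus_n\Img^n(\frki)$ is a subcomplex, i.e. $\huaD\big(\Img^n(\frki)\big)\subseteq\Img^{n+1}(\frki)$; this is what makes $\huaD_{\rm RB}$ well defined in the first place, and it is the step I would verify carefully. Writing a general element of $\Img^n(\frki)$ as $(f,f,\theta)$ and applying formula \eqref{eq:Dexplicit}, one has $\huaD(f,f,\theta)=(\partial(f,f),\delta\theta+h_T(f,f))$, and the subcomplex property reduces to two computations. First, that the \LR pair differential preserves the diagonal, $\partial(f,f)=(\dM_\CE f,\dM_\CE f)$; here the hypothesis $\rho=\ad$ is decisive, since then $\rho$ and $\mu$ encode one and the same bracket, so the $\g$-valued and $V$-valued components of $(-1)^{n-1}[\mu+\ad,(f,f)]_{\NR}$ coincide and each collapses to $(-1)^{n-1}[\mu,f]_{\NR}=\dM_\CE f$. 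Second, that $h_T(f,f)=\Omega f$: substituting $f_\g=f_V=f$ into \eqref{key-cohomology-T} and transporting each $v_i$ from the final slot back to the $i$-th slot, which produces the sign $(-1)^{n-i}$ by antisymmetry of $f$, turns the summation into precisely the defining expression for $\Omega f$. Combining the two gives $\huaD(f,f,\theta)=(\dM_\CE f,\dM_\CE f,\delta\theta+\Omega f)=\frki(\dM_\CE f,\delta\theta+\Omega f)$, which both confirms that $\Img(\frki)$ is a subcomplex and reproduces formula \eqref{eq:dRB} for $\huaD_{\rm RB}$.

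Granting the subcomplex property, the conclusion is formal. Since $\huaD\circ\frki$ lands in $\Img(\frki)$ and $\frki\circ\frkp$ restricts to the identity there, we obtain $\frki\circ\frkp\circ\huaD\circ\frki=\huaD\circ\frki$, and therefore
\begin{eqnarray*}
\huaD_{\rm RB}\circ\huaD_{\rm RB}=\frkp\circ\huaD\circ\frki\circ\frkp\circ\huaD\circ\frki=\frkp\circ\huaD\circ\huaD\circ\frki=0.
\end{eqnarray*}

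I expect the genuine obstacle to lie in the first of the two computations above, namely $\partial(f,f)=(\dM_\CE f,\dM_\CE f)$. The Nijenhuis-Richardson bracket of $\mu+\ad$ against a diagonal cochain mixes the two summands through the shuffle sums, and one must keep careful track both of the shuffle signs and of the identification of $f\in\Hom(\wedge^n\g,\g)$ with its image in $\Hom(\wedge^{n-1}\g\otimes\g,\g)$ in order to see that the two output components really agree. Once diagonal-preservation of $\partial$ is established, the remaining sign bookkeeping for $h_T$ and the formal conjugation argument present no difficulty.
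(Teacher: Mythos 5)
Your proposal is correct and takes essentially the same route as the paper: the paper's entire proof is the one-line conjugation $\huaD_{\rm RB}\circ\huaD_{\rm RB}=\frkp\circ\huaD\circ\frki\circ\frkp\circ\huaD\circ\frki=\frkp\circ\huaD\circ\huaD\circ\frki=0$, which rests on the previously asserted (but unproved) fact that $\oplus_n\Img^n(\frki)$ is a subcomplex, exactly the point you identified as the crux. Your verification that $\partial$ preserves diagonal cochains when $\rho=\ad$ and that $h_T(f,f)=\Omega f$ (with the correct sign bookkeeping) supplies the justification the paper omits, so your argument is a filled-in version of the paper's proof rather than a different one.
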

\begin{proof}
One has
  \begin{eqnarray*}
   \huaD_{\rm RB}\circ\huaD_{\rm RB}=\frkp\circ \huaD\circ \frki\circ \frkp\circ \huaD\circ \frki=\frkp\circ \huaD\circ   \huaD\circ \frki=0,
  \end{eqnarray*}
 which finishes the proof.
\end{proof}

\begin{defi}
  Let $(\g,[\cdot,\cdot]_\g,T)$ be a \RB Lie algebra. The cohomology of the cochain complex  $(\oplus_{n=0}^{+\infty}\frkC^n_{\rm RB}(\g,T),\huaD_{\rm RB})$ is taken to be the {\em cohomology of the \RB Lie algebra} $(\g,[\cdot,\cdot]_\g,T)$. Denote the $n$-th cohomology group by $\huaH^n_{\rm RB}(\g,T).$
\end{defi}

\subsection{Cohomology of triangular Lie bialgebras}\label{sec:cohomologyTLB}
In this subsection, all vector spaces are assumed to be finite-dimensional. We survey the cohomology of   triangular  Lie bialgebras, which is defined  with the help of the general cohomological framework for relative \RB Lie algebras.

Recall that a Lie bialgebra is a vector space $\g$ equipped
with a Lie algebra structure
$[\cdot,\cdot]_\g:\wedge^2\g\longrightarrow\g$ and a Lie coalgebra
structure $\delta:\g\longrightarrow\wedge^2\g$ such that $\delta$
is a 1-cocycle on $\g$ with coefficients in $\wedge^2\g$.
The Lie bracket $[\cdot,\cdot]_\g$ in a Lie algebra $\g$
naturally extends to the  {\em Schouten-Nijenhuis bracket} $[\cdot,\cdot]_\SN$ on $\wedge^\bullet\g=\oplus_{k\geq 0} \wedge^{k+1}\g$. More precisely, one has
$$
~[x_1\wedge\cdots \wedge x_p,y_1\wedge\cdots\wedge y_q]_\SN
=\sum_{1\le i\le p\atop 1\le j\le q} (-1)^{i+j}[x_i,y_j]_\g\wedge x_1\wedge\cdots\hat{x}_i\cdots \wedge x_p\wedge y_1\wedge\cdots\hat{y}_j\cdots\wedge y_q.
$$

An element $r\in\wedge^2\g$ is called a {\em skew-symmetric $r$-matrix} \cite{STS} if it satisfies the {\em classical Yang-Baxter equation}
$
[r,r]_\SN=0
$.
It is well known ~\cite{Ku} that $r$ satisfies the classical Yang-Baxter
equation if and only if $r^\sharp$ is a relative \RB operator on $\g$
with respect to the coadjoint representation,
where $r^\sharp:\g^*\lon\g$ is defined by $\langle r^\sharp(\xi),\eta\rangle=\langle r,\xi\wedge \eta\rangle$ for all $\xi,\eta\in\g^*$.

Let $r$ be a skew-symmetric $r$-matrix. Define
$\delta_r:\g\longrightarrow\wedge^2\g$ by
$
\delta_r(x)=[x,r]_\SN,$  for all $x\in\g.
$
Then $(\g,[\cdot,\cdot]_\g,\delta_r)$ is a Lie bialgebra, which is
called a {\em triangular Lie bialgebra}. From now on,   denote a triangular Lie bialgebra by $(\g,[\cdot,\cdot]_\g,r)$.

Let $\g$ be a Lie algebra and $r\in\wedge^2\g$ a skew-symmetric $r$-matrix. Define the set of 0-cochains and 1-cochains to be zero and define the set of $k$-cochains to be $\wedge^k\g$. Define $\dM_r:\wedge^k\g\lon\wedge^{k+1}\g$ by
\begin{equation}\label{eq:dr}
  \dM_r \chi =[r,\chi]_\SN,\quad\forall \chi\in \wedge^k\g.
\end{equation}
Then $\dM_r^2=0.$
Denote by $\huaH^k(r)$ the corresponding $k$-th cohomology group,
called  the {\em $k$-th cohomology group of the skew-symmetric
$r$-matrix $r$}.

For any $k\geq 1$, define $\Psi:\wedge^{k+1}\g\longrightarrow \Hom(\wedge^k\g^*,\g)$ by
\begin{equation}\label{eq:defipsi}
 \langle\Psi(\chi)(\xi_1,\cdots,\xi_k),\xi_{k+1}\rangle=\langle \chi,\xi_1\wedge\cdots\wedge\xi_k\wedge\xi_{k+1}\rangle,\quad \forall \chi\in\wedge^{k+1}\g, \xi_1,\cdots, \xi_{k+1}\in\g^*.
\end{equation}
By \cite[Theorem 7.7]{TBGS-1}, we have
\begin{equation}\label{eq:relationdd}
  \Psi(\dM_r\chi)=\delta(\Psi(\chi)),\quad \forall \chi\in\wedge^k\g.
\end{equation}
Thus $(\Img(\Psi),\delta)$ is a subcomplex of the cochain complex $(\oplus_k\frkC^k(r^\sharp),\delta)$ associated to the relative \RB operator $r^\sharp$, where $\Img(\Psi):=\oplus_k\{\Psi(\chi)|\forall \chi\in\wedge^k\g\}$ and $\delta$ is the coboundary operator given by \eqref{eq:odiff} for the relative \RB operator $r^\sharp$.

In the following, we survey the cohomology of a triangular Lie bialgebra $(\g,[\cdot,\cdot]_\g,r)$. Define the set of  $0$-cochains $\frkC^0_{\TLB}(\g,r)$ to be $0$, and define
the set of  $1$-cochains to be $\frkC^1_{\TLB}(\g,r):=\Hom(\g,\g)$. For $n\geq 2$, define the space of   $n$-cochains $\frkC^n_{\TLB}(\g,r)$ by
\begin{eqnarray*}
\frkC^n_{\TLB}(\g,r):=\Hom(\wedge^n\g,\g)\oplus \wedge^{n}\g.
\end{eqnarray*}

Define the embedding $\frki:\frkC^n_{\TLB}(\g,r)\lon \frkC^n(\g,\ad^*,r^{\sharp})=\Hom(\wedge^n\g,\g)\oplus \Hom(\wedge^{n-1}\g\otimes \g^*,\g^*)\oplus \Hom(\wedge^{n-1}\g^*,\g)$ by
$$
\frki(f,\chi)=(f,f^\star,\Psi(\chi)),\quad \forall f\in \Hom(\wedge^n\g,\g), \chi\in\wedge^n\g,
$$
where  $f^\star\in\Hom(\wedge^{n-1}\g\otimes\g^*,\g^*)$ is defined by
\begin{eqnarray}
 \label{dual-lie-r-1}\langle f^\star(x_1,\cdots,x_{n-1},\xi),x_n\rangle=-\langle \xi,f(x_1,\cdots,x_{n-1},x_n)\rangle.
\end{eqnarray}

Denote by $\Img^n(\frki)$ the image of $\frki$, i.e. $\Img^n(\frki):=\{\frki(f,\chi)|\forall (f,\chi)\in \frkC^n_{\TLB}(\g,r)\}$.
It was proved in \cite{LST} that  $(\oplus_n\Img^n(\frki),\huaD)$ is a subcomplex of the cochain complex $(\frkC^n(\g,\ad^*,r^{\sharp}),\huaD)$ associated to the relative \RB Lie algebra $((\g,[\cdot,\cdot]_\g),\ad^*,r^\sharp)$.

Define the projection $\frkp:\Img^n(\frki)\lon \frkC^n_{\TLB}(\g,r)$ by
$$
\frkp(f,f^\star,\theta)=(f,\theta^\flat), \quad \forall f \in \Hom(\wedge^n\g,\g),~ \theta\in\{\Psi(\chi)|\forall \chi\in\wedge^n\g\},
$$
where $\theta^\flat\in\wedge^n\g$ is defined by$
\langle\theta^\flat,\xi_1\wedge\cdots\wedge\xi_n\rangle=\langle\theta(\xi_1,\cdots,\xi_{n-1}),\xi_n\rangle.
$
Define the {\em coboundary operator} $\huaD_\TLB:\frkC^n_{\TLB}(\g,r)\lon \frkC^{n+1}_{\TLB}(\g,r)$ for a triangular Lie bialgebra  by
$$
\huaD_\TLB=\frkp\circ \huaD\circ \frki.
$$

\begin{thm}
The map $\huaD_\TLB$ is a coboundary operator, i.e. $\huaD_\TLB\circ\huaD_\TLB=0$.
\end{thm}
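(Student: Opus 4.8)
The plan is to reproduce verbatim the argument already used above for $\huaD_{\rm RB}$, since the present situation has exactly the same formal shape: $\huaD_\TLB=\frkp\circ\huaD\circ\frki$ is a conjugate of the coboundary operator $\huaD$ of the ambient relative \RB Lie algebra $((\g,[\cdot,\cdot]_\g),\ad^*,r^\sharp)$ by the pair of maps $\frki,\frkp$. The two facts I will lean on are that $\frki$ and $\frkp$ are mutually inverse between $\frkC^n_{\TLB}(\g,r)$ and $\Img^n(\frki)$, and that $(\oplus_n\Img^n(\frki),\huaD)$ is a subcomplex, so that $\huaD$ maps $\Img^n(\frki)$ into $\Img^{n+1}(\frki)$.

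First I would record the elementary identity $\Psi(\chi)^\flat=\chi$ for all $\chi\in\wedge^n\g$. Unwinding the definition of $\Psi$ in \eqref{eq:defipsi} and of $(\cdot)^\flat$ gives $\langle\Psi(\chi)^\flat,\xi_1\wedge\cdots\wedge\xi_n\rangle=\langle\Psi(\chi)(\xi_1,\cdots,\xi_{n-1}),\xi_n\rangle=\langle\chi,\xi_1\wedge\cdots\wedge\xi_n\rangle$, whence $\frkp(\frki(f,\chi))=\frkp(f,f^\star,\Psi(\chi))=(f,\Psi(\chi)^\flat)=(f,\chi)$, i.e. $\frkp\circ\frki=\Id$. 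Reading the same computation in reverse shows $\frki\circ\frkp=\Id$ on $\Img^n(\frki)$, since every element there has the form $(f,f^\star,\Psi(\chi))$ and $\frki(\frkp(f,f^\star,\Psi(\chi)))=\frki(f,\chi)=(f,f^\star,\Psi(\chi))$. Thus $\frki$ and $\frkp$ are inverse bijections between $\frkC^n_{\TLB}(\g,r)$ and $\Img^n(\frki)$.

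The core of the proof is then purely formal. Because $(\oplus_n\Img^n(\frki),\huaD)$ is a subcomplex, $\huaD\circ\frki$ sends $\frkC^n_{\TLB}(\g,r)$ into $\Img^{n+1}(\frki)$, on which $\frki\circ\frkp$ acts as the identity by the previous step. Hence
$$\huaD_\TLB\circ\huaD_\TLB=\frkp\circ\huaD\circ\frki\circ\frkp\circ\huaD\circ\frki=\frkp\circ\huaD\circ(\frki\circ\frkp)\circ(\huaD\circ\frki)=\frkp\circ\huaD\circ\huaD\circ\frki,$$
and the right-hand side vanishes since $\huaD\circ\huaD=0$ by Theorem \ref{cohomology-of-relative-RB}.

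The only genuinely nontrivial input, which I would cite rather than reprove, is the subcomplex property $\huaD(\Img^n(\frki))\subseteq\Img^{n+1}(\frki)$ established in \cite{LST}; this is where the real work lies, and it is the step I expect to be the main obstacle if one insists on a self-contained argument. Establishing it requires checking that each component of $\huaD$ in \eqref{eq:Dexplicit}—the $\partial$, $\delta$, and $h_T$ parts—respects the two dualizations $f\mapsto f^\star$ of \eqref{dual-lie-r-1} and $\chi\mapsto\Psi(\chi)$ of \eqref{eq:defipsi}. The compatibility $\Psi(\dM_r\chi)=\delta\,\Psi(\chi)$ recorded in \eqref{eq:relationdd} is precisely the ingredient guaranteeing that the $T$-component stays inside $\Img(\frki)$; granting this and the corresponding statement for $f^\star$, the remainder is the same bookkeeping that produced $\huaD_{\rm RB}\circ\huaD_{\rm RB}=0$.
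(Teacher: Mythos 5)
Your proof is correct and follows essentially the same route as the paper: both reduce $\huaD_\TLB\circ\huaD_\TLB=\frkp\circ\huaD\circ(\frki\circ\frkp)\circ\huaD\circ\frki$ to $\frkp\circ\huaD\circ\huaD\circ\frki=0$ using the identity $\frki\circ\frkp=\Id$ on $\Img^n(\frki)$ together with the subcomplex property $\huaD(\Img^n(\frki))\subseteq\Img^{n+1}(\frki)$ cited from \cite{LST}. Your write-up is in fact slightly more careful than the paper's, since you make explicit both the verification $\Psi(\chi)^\flat=\chi$ underlying $\frkp\circ\frki=\Id$ and the precise point where the subcomplex property is needed for the middle cancellation (and indeed for $\frkp\circ\huaD\circ\frki$ to be well defined at all).
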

\begin{proof}
 Since $\frki\circ\frkp=\Id$ when restricting on the image of $\frki$, one has
  \begin{eqnarray*}
   \huaD_\TLB\circ\huaD_\TLB=\frkp\circ \huaD\circ \frki\circ \frkp\circ \huaD\circ \frki=\frkp\circ \huaD\circ   \huaD\circ \frki=0,
  \end{eqnarray*}
 which finishes the proof.
\end{proof}

 \begin{defi}
  Let $(\g,[\cdot,\cdot]_\g,r)$ be a triangular Lie bialgebra. The cohomology of the cochain complex  $(\oplus_{n=0}^{+\infty}\frkC^n_{\TLB}(\g,r),\huaD_\TLB)$ is called  the {\em cohomology of the triangular Lie bialgebra} $(\g,[\cdot,\cdot]_\g,r)$. Denote the $n$-th cohomology group by $ \huaH^n_\TLB(\g,r)$.
\end{defi}

Now we give the  precise formula for the coboundary operator $\huaD_\TLB$. By the definition of $\frki$, $\frkp$, $\huaD$ and \eqref{eq:relationdd}, one has
\begin{eqnarray}\label{eq:dTLBexplicit}
\huaD_\TLB(f,\chi)=\Big({\dM}_\CE f, \Theta f+\dM_r \chi \Big),\qquad \forall f\in \Hom(\wedge^n\g,\g),~\chi\in \wedge^{n}\g,
\end{eqnarray}
where   $\dM_r$ is given by \eqref{eq:dr} and $\Theta:\Hom(\wedge^n\g,\g)\lon  \wedge^{n+1}\g $ is defined  by
$
\Theta f=\Psi^{-1}(h_{r^\sharp}(f,f^\star)).
$
More precisely,
 \begin{eqnarray}
   \langle \Theta f,\xi_1\wedge \cdots\wedge\xi_{n+1}\rangle=\sum_{i=1}^{n+1}(-1)^{i+1}\langle\xi_i,f(r^{\sharp}(\xi_1),\cdots,r^{\sharp}(\xi_{i-1}),r^{\sharp}(\xi_{i+1}),
   \cdots,r^{\sharp}(\xi_{n+1}))\rangle,
 \end{eqnarray}
for all $f\in\Hom(\wedge^n\g,\g)$ and $\xi_1,\cdots,\xi_{n+1}\in\g^*$.

\begin{rmk}
  One can use the cohomology theory developed here to study infinitesimal deformations. More precisely, the cohomology groups $\huaH^2(\g,\rho,T),~\huaH^2_{\rm RB}(\g,T), ~\huaH^2_\TLB(\g,r)$ classify infinitesimal deformations of the relative \RB Lie algebra $(\g,\rho,T)$, the \RB Lie algebra $(\g,T)$ and the triangular Lie bialgebra $(\g,r)$ respectively.
\end{rmk}

\section{Homotopies of relative Rota-Baxter Lie algebras}\label{sec:hom}
In this section, we survey  the notion of a homotopy relative \RB Lie algebra, which consists of an $L_\infty$-algebra, its representation and a homotopy relative \RB operator.  Homotopy relative \RB operators can be characterized as Maurer-Cartan elements in a certain $L_\infty$-algebra. %We show that strict homotopy relative Rota-Baxter operators induce pre-Lie$_\infty$-algebras.

%\subsection{Homotopy relative Rota-Baxter operators on $L_\infty$-algebras }

Denote by $\Hom^n(\bar{\Sym}(V),V)$ the space of degree $n$ linear maps from the graded vector space $\bar{\Sym}(V)=\oplus_{i=1}^{+\infty}\Sym^{i}(V)$ to the $\mathbb Z$-graded vector space $V$. Obviously, an element $f\in\Hom^n(\bar{\Sym}(V),V)$ is the sum of $f_i:\Sym^i(V)\lon V$. We will write  $f=\sum_{i=1}^{+\infty} f_i$.
 Set $C^n(V,V):=\Hom^n(\bar{\Sym}(V),V)$ and
$
C^*(V,V):=\oplus_{n\in\mathbb Z}C^n(V,V).
$
As the graded version of the Nijenhuis-Richardson bracket given in \cite{NR,NR2}, the {\em graded Nijenhuis-Richardson bracket} $[\cdot,\cdot]_{\NR}$ on the graded vector space $C^*(V,V)$ is given
by
\begin{eqnarray}
[f,g]_{\NR}:=f\bar{\circ} g-(-1)^{mn}g\bar{\circ}f,\,\,\,\,\forall f=\sum_{i=1}^{+\infty} f_i\in C^m(V,V),~g=\sum_{j=1}^{+\infty} g_j\in C^n(V,V),
\label{eq:gfgcirc-lie}
\end{eqnarray}
where $f\bar{\circ}g\in C^{m+n}(V,V)$ is defined by
 \begin{eqnarray}\label{NR-circ}
f\bar{\circ}g&=&\Big(\sum_{i=1}^{+\infty}f_i\Big)\bar{\circ}\Big(\sum_{j=1}^{+\infty}g_j\Big):=\sum_{k=1}^{+\infty}\Big(\sum_{i+j=k+1}f_i\bar{\circ} g_j\Big),
\end{eqnarray}
while $f_i\bar{\circ} g_j\in \Hom(\Sym^{i+j-1}(V),V)$ is defined by
\begin{eqnarray}\label{graded-NR}
(f_i\bar{\circ} g_j)(v_1,\cdots,v_{i+j-1})
:=\sum_{\sigma\in\mathbb S_{(j,i-1)}}\varepsilon(\sigma)f_i(g_j(v_{\sigma(1)},\cdots,v_{\sigma(j)}),v_{\sigma(j+1)},\cdots,v_{\sigma(i+j-1)}).
\end{eqnarray}

The following result is well-known and, in fact, can be taken as a definition of an $L_\infty$-algebra.
\begin{thm}\label{graded-Nijenhuis-Richardson-bracket}
With the above notation, $(C^*(V,V),[\cdot,\cdot]_{\NR})$ is a graded Lie algebra. Its Maurer-Cartan elements $\sum_{k=1}^{+\infty}l_k$ are the $L_\infty$-algebra structures on $V$.
\end{thm}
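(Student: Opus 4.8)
The graded antisymmetry of $[\cdot,\cdot]_{\NR}$ is built into the defining formula \eqref{eq:gfgcirc-lie}, so the substance of the first claim is the graded Jacobi identity, which I would establish not by a direct shuffle computation but by identifying $(C^*(V,V),[\cdot,\cdot]_{\NR})$ with the graded Lie algebra of coderivations of a cofree coalgebra, where the Jacobi identity is automatic. The plan is to equip $\bar{\Sym}(V)$ with the reduced (deconcatenation) cocommutative comultiplication
\[
\Delta(v_1\odot\cdots\odot v_n)=\sum_{p=1}^{n-1}\ \sum_{\sigma\in\mathbb S_{(p,n-p)}}\varepsilon(\sigma)\,(v_{\sigma(1)}\odot\cdots\odot v_{\sigma(p)})\otimes(v_{\sigma(p+1)}\odot\cdots\odot v_{\sigma(n)}),
\]
making it the cofree conilpotent cocommutative coalgebra on $V$. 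By the universal property, every $f=\sum_i f_i\in C^*(V,V)=\Hom(\bar{\Sym}(V),V)$ lifts to a unique coderivation $D_f$ with $\pr_V\circ D_f=f$, and $f\mapsto D_f$ is a degree-preserving linear isomorphism onto $\Coder(\bar{\Sym}(V))$.

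The heart of the argument is to verify that this isomorphism carries $\bar{\circ}$ to composition, namely $\pr_V\circ(D_f\circ D_g)=f\bar{\circ}g$. Expanding $D_f\circ D_g$ against $\Delta$ and corestricting to $V$, the only surviving terms are those in which $D_g$ contracts $j$ inputs through some $g_j$ and $D_f$ then contracts the output together with the remaining $i-1$ inputs through some $f_i$, and the Koszul signs generated by reordering arguments through $\Delta$ are exactly the $\varepsilon(\sigma)$ attached to the $(j,i-1)$-shuffles in \eqref{graded-NR}. Granting this, $[f,g]_{\NR}$ corresponds to the graded commutator $D_f\circ D_g-(-1)^{mn}D_g\circ D_f$ in $\End(\bar{\Sym}(V))$. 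Since the coderivations form a graded Lie subalgebra of $(\End(\bar{\Sym}(V)),[\cdot,\cdot])$ — the commutator of two coderivations is again a coderivation — and the commutator in any associative graded algebra obeys the graded Jacobi identity, transporting the bracket back along $f\mapsto D_f$ yields the graded Jacobi identity for $[\cdot,\cdot]_{\NR}$. I expect the main obstacle to be precisely this sign bookkeeping: one must confirm that the Koszul rule interacts with the unshuffle coproduct so as to reproduce the shuffles of \eqref{graded-NR} with the correct signs.

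For the second claim, note that, by graded symmetry, an $L_\infty$-structure $\{l_k\}_{k\ge1}$ on $V$ is the same datum as a single degree-$1$ element $l=\sum_{k\ge1}l_k\in C^1(V,V)$, since each degree-$1$ graded-symmetric map $\otimes^kV\to V$ descends to an element of $\Hom(\Sym^k(V),V)$; the degree-$1$ normalization is exactly what makes $l$ a candidate codifferential. For a degree-$1$ element one has $[l,l]_{\NR}=2\,l\bar{\circ}l$, so the Maurer-Cartan equation $\tfrac12[l,l]_{\NR}=0$ is equivalent to $l\bar{\circ}l=0$. Decomposing $l\bar{\circ}l$ by arity via \eqref{NR-circ}, this amounts to $\sum_{i+j=n+1}l_i\bar{\circ}l_j=0$ for every $n\ge1$. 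Reindexing by the inner arity and substituting \eqref{graded-NR}, the arity-$n$ component is exactly the generalized Jacobi identity in condition (ii) of the definition of an $L_\infty$-algebra. Hence the Maurer-Cartan elements of $(C^*(V,V),[\cdot,\cdot]_{\NR})$ are precisely the $L_\infty$-structures $\sum_k l_k$ on $V$, as claimed.
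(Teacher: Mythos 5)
Your proof is correct, but note that the paper itself gives no proof of this statement: Theorem \ref{graded-Nijenhuis-Richardson-bracket} is recorded there as a well-known fact that ``can be taken as a definition of an $L_\infty$-algebra,'' with nothing further said. What you supply is the standard conceptual argument from the literature (essentially Lada--Markl): identify $C^*(V,V)=\Hom(\bar{\Sym}(V),V)$ with $\Coder(\bar{\Sym}(V))$ using cofreeness of the reduced symmetric coalgebra, check that corestriction to $V$ carries composition of coderivations to $\bar{\circ}$, and transport the graded commutator, so that graded antisymmetry and the Jacobi identity come for free from $\End(\bar{\Sym}(V))$; the Maurer-Cartan/codifferential dictionary then yields the second claim. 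Your arity-by-arity identification of $l\bar{\circ}l=0$ with the generalized Jacobi identities is also right: the arity-$n$ component $\sum_{i+j=n+1}l_i\bar{\circ}l_j$ of \eqref{NR-circ} reproduces exactly the paper's identity with inner arity $j$ and outer arity $i=n-j+1$, including the $(j,i-1)$-shuffle signs. The one place you hedge --- ``granting'' that the Koszul signs generated by the unshuffle coproduct match the $\varepsilon(\sigma)$ in \eqref{graded-NR} --- is precisely the computation a fully self-contained proof must carry out, but it is a routine and true verification, and it is far cleaner than the alternative of checking the graded Jacobi identity directly from the shuffle formula \eqref{eq:gfgcirc-lie}. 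In short, your route buys an actual proof and a conceptual explanation of why the bracket closes and satisfies Jacobi, where the paper simply appeals to the result being well known.
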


\begin{defi}{\rm (\cite{LM})}
A {\em representation} of an $L_\infty$-algebra $(\g,\{l_k\}_{k=1}^{+\infty})$ on a graded vector space $V$ consists of linear maps $\rho_k:\Sym^{k-1}(\g)\otimes V\lon V$, $k\geq 1$, of degree $1$ with the property that, for any homogeneous elements $x_1,\cdots,x_{n-1}\in \g,~v\in V$, we have
\begin{eqnarray*}\label{sh-Lie-rep}
&&\sum_{i=1}^{n-1}\sum_{\sigma\in \mathbb S_{(i,n-i-1)} }\varepsilon(\sigma)\rho_{n-i+1}(l_i(x_{\sigma(1)},\cdots,x_{\sigma(i)}),x_{\sigma(i+1)},\cdots,x_{\sigma(n-1)},v)\\
\nonumber&&+\sum_{i=1}^{n}\sum_{\sigma\in \mathbb S_{(n-i,i-1)} }\varepsilon(\sigma)(-1)^{x_{\sigma(1)}+\cdots+x_{\sigma(n-i)}}\rho_{n-i+1}(x_{\sigma(1)},\cdots,x_{\sigma(n-i)},\rho_i(x_{\sigma(n-i+1)},\cdots,x_{\sigma(n-1)},v))=0.
\end{eqnarray*}
\end{defi}

Let $(V,\{\rho_k\}_{k=1}^{+\infty})$ be a representation of an $L_\infty$-algebra $(\g,\{l_k\}_{k=1}^{+\infty})$. There is an $L_\infty$-algebra structure on the direct sum $\g\oplus V$ given by
\begin{eqnarray*}
l_k\big((x_1,v_1),\cdots,(x_k,v_k)\big):=\big(l_k(x_1,\cdots,x_k),\sum_{i=1}^{k}(-1)^{x_i(x_{i+1}+\cdots+x_k)}\rho_k(x_1,\cdots,x_{i-1},x_{i+1},\cdots,x_k,v_i)\big).
\end{eqnarray*}
This $L_\infty$-algebra is called the
{\em semidirect product} of the $L_\infty$-algebra $(\g,\{l_k\}_{k=1}^{+\infty})$ and $(V,\{\rho_k\}_{k=1}^{+\infty})$, and denoted by $\g\ltimes_{\rho}V$.

\begin{defi}\label{de:homoop}
\begin{enumerate}
\item[\rm(i)] Let $(V,\{\rho_k\}_{k=1}^{+\infty})$ be a representation of an $L_\infty$-algebra $(\g,\{l_k\}_{k=1}^{+\infty})$. A degree $0$ element $T=\sum_{k=1}^{+\infty}T_k\in \Hom(\bar{\Sym}(V),\g)$ with $T_k\in \Hom(\Sym^k(V),\g)$ is called a {\em  homotopy relative \RB operator} on an $L_\infty$-algebra $(\g,\{l_k\}_{k=1}^{+\infty})$ with respect to the representation $(V,\{\rho_k\}_{k=1}^{+\infty})$ if the following equalities hold for all $p\geq 1$ and all homogeneous elements $v_1,\cdots,v_p\in V$,

\begin{eqnarray}
\nonumber
\label{full-homotopy-rota-baxter-o}&&\sum_{k_1+\cdots+k_m=t\atop 1\le t\le p-1}\sum_{\sigma\in \mathbb S_{(k_1,\cdots,k_m,1,p-1-t)}}\frac{\varepsilon(\sigma)}{m!}\cdot\\
\nonumber &&T_{p-t}\Big(\rho_{m+1}\Big(T_{k_1}\big(v_{\sigma(1)},\cdots,v_{\sigma(k_1)}\big),\cdots,T_{k_m}\big(v_{\sigma(k_1+\cdots+k_{m-1}+1)},\cdots,v_{\sigma(t)}\big),v_{\sigma(t+1)}\Big),v_{\sigma(t+2)},\cdots,v_{\sigma(p)}\Big)\\
&=&\sum_{k_1+\cdots+k_n=p}\sum_{\sigma\in \mathbb S_{(k_1,\cdots,k_n)}}\frac{\varepsilon(\sigma)}{n!}l_n\Big(T_{k_1}\big(v_{\sigma(1)},\cdots,v_{\sigma(k_1)}\big),\cdots,T_{k_n}\big(v_{\sigma(k_1+\cdots+k_{n-1}+1)},\cdots,v_{\sigma(p)}\big)\Big).
\nonumber
\end{eqnarray}

\item[\rm(ii)] A {\em homotopy relative \RB Lie algebra} is a triple $\big((\g,\{l_k\}_{k=1}^{+\infty}),\{\rho_k\}_{k=1}^{+\infty},\{T_k\}_{k=1}^{+\infty}\big)$, where $(\g,\{l_k\}_{k=1}^{+\infty})$ is an $L_\infty$-algebra, $(V,\{\rho_k\}_{k=1}^{+\infty})$ is a representation of $\g$ on a graded vector space $V$ and $T=\sum_{k=1}^{+\infty}T_k\in \Hom(\bar{\Sym}(V),\g)$ is a  homotopy relative \RB operator.
\end{enumerate}
\end{defi}

A homotopy relative \RB operator on an  $L_\infty$-algebra is a generalization of an $\huaO$-operator on a Lie $2$-algebra introduced in \cite{Sheng}.

A representation of an $L_\infty$-algebra will give rise to a V-data as well as an $L_\infty$-algebra that characterize homotopy relative \RB operators  as \MC elements.

\begin{pro}
Let $(\g,\{l_k\}_{k=1}^{+\infty})$ be an $L_\infty$-algebra and $(V,\{\rho_k\}_{k=1}^{+\infty})$ a representation of $(\g,\{l_k\}_{k=1}^{+\infty})$. Then the following quadruple forms a V-data:
\begin{itemize}
\item[$\bullet$] the graded Lie algebra $(L,[\cdot,\cdot])$ is given by $(C^*(\g\oplus V,\g\oplus V),[\cdot,\cdot]_{\NR})$;
\item[$\bullet$] the abelian graded Lie subalgebra $H$ is given by $H:=\oplus_{n\in\mathbb Z}\Hom^n(\bar{\Sym}(V),\g);$
\item[$\bullet$] $P:L\lon L$ is the projection onto the subspace $H$;
\item[$\bullet$] $\Delta=\sum_{k=1}^{+\infty}(l_k+\rho_k)$.
\end{itemize}

Consequently, $(H,\{\frkl_k\}_{k=1}^{+\infty})$ is an $L_\infty$-algebra, where $\frkl_k$ is given by
\eqref{V-shla}.
\end{pro}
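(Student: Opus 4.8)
The plan is to verify that the quadruple $(L,H,P,\Delta)$ satisfies the four axioms of a $V$-data, after which the $L_\infty$-algebra structure on $H=s^{-1}(s^{-1}L\oplus H)$... more precisely on the relevant component, follows directly by invoking Theorem \ref{thm:db-big-homotopy-lie-algebra}. By Theorem \ref{graded-Nijenhuis-Richardson-bracket}, $(L,[\cdot,\cdot]_{\NR})=(C^*(\g\oplus V,\g\oplus V),[\cdot,\cdot]_{\NR})$ is already a graded Lie algebra, so the first axiom is for free. The abelianness of $H$ is the analogue of Lemma \ref{Zero-condition-2} in the present graded/symmetric setting: $H$ consists of maps landing in $\g$ whose inputs come only from the $V$-slots, so the composition $\bar\circ$ of two such maps must feed a $\g$-valued output back into a slot that only accepts $V$-inputs, forcing the bracket to vanish. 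I would carry over the bidegree bookkeeping of Lemma \ref{important-lemma-2} to $C^*(\g\oplus V,\g\oplus V)$ to make this vanishing precise.

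For the projection axiom, $P\circ P=P$ is immediate since $P$ is projection onto $H$; the substantive point is that $\ker(P)$ is a graded Lie subalgebra. First I would identify $\ker(P)$ explicitly as the span of all homogeneous maps whose output lies in $V$, or whose output lies in $\g$ but which depend nontrivially on at least one $\g$-input (equivalently, everything of bidegree not equal to the $H$-type $(-1)|\bullet$). I would then check that this complementary bidegree condition is preserved under $[\cdot,\cdot]_{\NR}$, again using the graded version of Lemma \ref{important-lemma-2}. The final, and most delicate, axiom is $\Delta\in\ker(P)^1$ with $[\Delta,\Delta]=0$. That $\Delta=\sum_{k\ge 1}(l_k+\rho_k)$ has degree $1$ and lies in $\ker(P)$ is a matter of unwinding definitions: each $l_k+\rho_k$ sends inputs from $\g\oplus V$ to $\g\oplus V$ in a way that genuinely uses the $\g$-component, so it is not in $H$.

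The heart of the proof is the Maurer-Cartan condition $[\Delta,\Delta]_{\NR}=0$. The key observation is that $\Delta$ is precisely the element of $C^*(\g\oplus V,\g\oplus V)$ encoding the semidirect product $L_\infty$-structure $\g\ltimes_\rho V$ on $\g\oplus V$: the $l_k$ assemble the $L_\infty$-brackets on $\g$ and the $\rho_k$ assemble the representation action, which together are exactly the structure maps of $\g\ltimes_\rho V$ described earlier. By Theorem \ref{graded-Nijenhuis-Richardson-bracket}, an $L_\infty$-structure on $\g\oplus V$ is the same as a Maurer-Cartan element of $(C^*(\g\oplus V,\g\oplus V),[\cdot,\cdot]_{\NR})$, i.e.\ an element $\Delta$ of degree $1$ with $[\Delta,\Delta]_{\NR}=0$. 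Thus the defining axioms of an $L_\infty$-algebra and its representation, repackaged as the semidirect product, are \emph{equivalent} to $[\Delta,\Delta]_{\NR}=0$, and this is where the real content sits; I expect the main obstacle to be carefully matching the multilinear defining identities for $(\g,\{l_k\})$ and $(V,\{\rho_k\})$ against the expansion of $[\Delta,\Delta]_{\NR}$ into its homogeneous components, tracking Koszul signs and shuffle sums. Once the $V$-data is established, the concluding statement that $(H,\{\frkl_k\}_{k=1}^{+\infty})$ is an $L_\infty$-algebra with $\frkl_k$ given by \eqref{V-shla} follows at once from Theorem \ref{thm:db-big-homotopy-lie-algebra}, since $\Delta=0$ is not assumed here and the relevant brackets are exactly the last family in \eqref{V-shla}.
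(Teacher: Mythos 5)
Your proposal is correct and follows essentially the same route as the paper: abelianness of $H$ and closedness of $\ker P$ read off from the graded Nijenhuis-Richardson bracket, and the key step identifying $\Delta=\sum_{k\ge 1}(l_k+\rho_k)$ with the semidirect product $L_\infty$-structure $\g\ltimes_\rho V$, so that $[\Delta,\Delta]_{\NR}=0$ follows from Theorem \ref{graded-Nijenhuis-Richardson-bracket}, with Theorem \ref{thm:db-big-homotopy-lie-algebra} concluding. The paper's proof is just a compressed version of what you wrote.
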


\begin{proof}
By Theorem \ref{graded-Nijenhuis-Richardson-bracket},   $(C^*(\g\oplus V,\g\oplus V),[\cdot,\cdot]_{\NR})$ is a graded Lie algebra. Moreover, by \eqref{graded-NR}, $\Img P=H$ is an abelian graded Lie subalgebra and $\ker P$ is a graded Lie subalgebra. Since $\Delta=\sum_{k=1}^{+\infty}(l_k+\rho_k)$ is the semidirect product  $L_\infty$-algebra structure on $\g\oplus V$, one has $[\Delta,\Delta]_{\NR}=0$ and $P(\Delta)=0$. Thus $(L,H,P,\Delta)$ is a V-data. Hence by Theorem \ref{thm:db-big-homotopy-lie-algebra}, one obtains the higher derived brackets $\{{\frkl_k}\}_{k=1}^{+\infty}$ on the abelian graded Lie subalgebra $H$.
\end{proof}

\begin{thm}\label{hmotopy-o-operator-homotopy-lie}
With the above notation, a degree $0$ element  $T=\sum_{k=1}^{+\infty}T_k\in \Hom(\bar{\Sym}(V),\g)$ is a homotopy relative \RB operator on $(\g,\{l_k\}_{k=1}^{+\infty})$ with respect to the representation $(V,\{\rho_k\}_{k=1}^{+\infty})$ if and only if $T=\sum_{k=1}^{+\infty}T_k$ is a Maurer-Cartan element of the  $L_\infty$-algebra $(H,\{{\frkl_k}\}_{k=1}^{+\infty})$.
\end{thm}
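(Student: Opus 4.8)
The plan is to unwind the Maurer-Cartan equation for $T=\sum_{k=1}^{+\infty}T_k$ in the higher-derived-bracket $L_\infty$-algebra $(H,\{\frkl_k\})$ and to match it, evaluated on an arbitrary $v_1\odot\cdots\odot v_p\in\Sym^p(V)$, with the defining identity \eqref{full-homotopy-rota-baxter-o} of Definition \ref{de:homoop}. By the preceding proposition together with \eqref{V-shla}, the brackets on the abelian subalgebra $H$ are $\frkl_k(a_1,\dots,a_k)=P[\cdots[[\Delta,a_1]_{\NR},a_2]_{\NR}\cdots,a_k]_{\NR}$, so the Maurer-Cartan equation for the degree $0$ element $T$ reads
\[
\sum_{k=1}^{+\infty}\frac{1}{k!}\,P\,\underbrace{[\cdots[[\Delta,T]_{\NR},T]_{\NR}\cdots,T]_{\NR}}_{k\ \text{copies of}\ T}=0 .
\]
First I would substitute $\Delta=\sum_{m\ge1}(l_m+\rho_m)$, the semidirect-product $L_\infty$-structure on $\g\oplus V$, and expand each nested bracket via the graded Nijenhuis-Richardson formulas \eqref{NR-circ}--\eqref{graded-NR}.

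The heart of the argument is a bookkeeping of inputs and outputs, exactly the graded-symmetric analogue of the bidegree count in Lemma \ref{important-lemma-2}. Since $T$ feeds on $V$ and outputs in $\g$, while $P$ retains only the part of a cochain that takes all of its arguments in $V$ and outputs in $\g$, I would track which insertions of the several copies of $T$ into the single $\Delta$ can survive $P$. For the component $l_n$ of $\Delta$ only the composite $l_n\bar{\circ}T$ is nonzero (the opposite composite $T\bar{\circ}l_n$ vanishes because $l_n$ outputs in $\g$ whereas $T$ consumes $V$), and $P$ retains a contribution only when all $n$ of the $\g$-slots of $l_n$ are saturated by outputs of $T$; this forces exactly $n$ copies of $T$ and produces the right-hand side $l_n\big(T_{k_1}(\cdots),\dots,T_{k_n}(\cdots)\big)$ of \eqref{full-homotopy-rota-baxter-o}. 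For the component $\rho_{m+1}$ both composites survive: $\rho_{m+1}\bar{\circ}T$ fills its $\g$-slots with $T$-outputs, while $T\bar{\circ}\rho_{m+1}$ absorbs the $V$-valued output of $\rho_{m+1}$ into one further $T$; together these saturate all $\g$-slots and remove the leftover $V$-output, forcing $m+1$ copies of $T$ and reproducing the left-hand side $T_{p-t}\big(\rho_{m+1}(T_{k_1}(\cdots),\dots,v),\dots\big)$. Every other placement either leaves an unpaired $V$-valued output, which is annihilated by $P$, or would require bracketing two elements of the abelian subalgebra $H$.

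The remaining and genuinely delicate step is the reconciliation of signs and combinatorial factors, and I expect this to be the main obstacle. The single normalization $\tfrac{1}{k!}$ attached to $\frkl_k$ must be redistributed: since all $k$ arguments of $\frkl_k$ equal $T$, the graded-symmetric sum over the ways of distributing $v_1,\dots,v_p$ among the copies of $T$ and among the slots of $l_n$ (resp.\ $\rho_{m+1}$), together with the choice of which copy of $T$ plays the outer absorbing role in the $\rho$-terms, generates internal multiplicities that convert $\tfrac{1}{k!}$ into precisely the factors $\tfrac{1}{n!}$ and $\tfrac{1}{m!}$ and the shuffle sets $\mathbb S_{(k_1,\dots,k_n)}$ and $\mathbb S_{(k_1,\dots,k_m,1,p-1-t)}$ appearing in \eqref{full-homotopy-rota-baxter-o}; simultaneously one must verify that the Koszul signs $\varepsilon(\sigma)$ produced by \eqref{graded-NR} coincide with those of the target identity. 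This sign-and-shuffle matching is the homotopy analogue of deducing the quadratic operator identity \eqref{eq:defiO} from $[[\pi,T]_{\NR},T]_{\NR}=0$ in the strict case. Once the bidegree analysis of the previous paragraph isolates the two families of surviving terms, the desired equivalence follows by comparing coefficients of $v_1\odot\cdots\odot v_p$ on both sides for every $p\ge1$.
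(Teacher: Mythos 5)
Your proposal is correct and takes essentially the same route as the paper's proof, which itself simply defers to \cite[Theorem 5.10]{LST}: there the Maurer-Cartan equation of $T$ in $(H,\{\frkl_k\}_{k=1}^{+\infty})$ is unwound exactly as you describe, by substituting $\Delta=\sum_{m\geq 1}(l_m+\rho_m)$ into the nested Nijenhuis--Richardson brackets and letting the projection $P$ kill every term except the two families you isolate. Your bookkeeping of the survivors (the $l_n$-component forcing exactly $n$ copies of $T$ and producing the right-hand side of the defining identity; the $\rho_{m+1}$-component with one absorbing copy of $T$ producing the left-hand side, the $m+1$ choices of absorbing copy converting the Maurer-Cartan factor $1/(m+1)!$ into the $1/m!$ of the identity, and nested $T$'s dying under $P$) is precisely the computation carried out in that reference.
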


\begin{proof}
See the proof of \cite[Theorem 5.10]{LST}.
\end{proof}

A homotopy relative \RB operator    naturally gives rise to an $L_\infty$-algebra structure on $V$.

\begin{pro}\label{twist-homotopy-lie}
Let $T=\sum_{k=1}^{+\infty}T_k\in \Hom(\bar{\Sym}(V),\g)$ be a homotopy relative \RB operator on $(\g,\{l_k\}_{k=1}^{+\infty})$ with respect to the representation $(V,\{\rho_k\}_{k=1}^{+\infty})$.
\begin{itemize}
  \item[\rm(i)] $e^{[\cdot,T]_\NR}\Big(\sum_{k=1}^{+\infty}(l_k+\rho_k)\Big)$ is a Maurer-Cartan element of the graded Lie algebra $(C^*(\g\oplus V,\g\oplus V),[\cdot,\cdot]_{\NR})$;
  \item[\rm(ii)] there is an $L_\infty$-algebra structure on $V$  given by
\begin{eqnarray}
\nonumber&&\frkl_{t+1}(v_1,\cdots,v_{t+1})=\sum_{k_1+\cdots+k_m=t}\sum_{\sigma\in \mathbb S_{(k_1,\cdots,k_m,1)}}\\
\nonumber&&\frac{\varepsilon(\sigma)}{m!}\rho_{m+1}\Big(T_{k_1}\big(v_{\sigma(1)},\cdots,v_{\sigma(k_1)}\big),\cdots,T_{k_m}\big(v_{\sigma(k_1+\cdots+k_{m-1}+1)},\cdots,v_{\sigma(t)}\big),v_{\sigma(t+1)}\Big);
\end{eqnarray}
  \item[\rm(iii)]  $T$ is an $L_\infty$-algebra homomorphism from the  $L_\infty$-algebra $(V,\{\frkl_k\}_{k=1}^{+\infty})$ to   $(\g,\{l_k\}_{k=1}^{+\infty})$.
\end{itemize}
\end{pro}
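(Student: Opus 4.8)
The plan is to build everything on the Maurer--Cartan characterization of Theorem~\ref{hmotopy-o-operator-homotopy-lie} together with the fact that $\Delta:=\sum_{k=1}^{+\infty}(l_k+\rho_k)$ is a Maurer--Cartan element of $(C^*(\g\oplus V,\g\oplus V),[\cdot,\cdot]_\NR)$, i.e. $[\Delta,\Delta]_\NR=0$ and $P(\Delta)=0$. For~(i), since $T$ lies in the abelian subalgebra $H$ one has $[T,T]_\NR=0$, so $[\,\cdot\,,T]_\NR$ is a degree $0$ derivation of the graded Lie bracket whose exponential $e^{[\,\cdot\,,T]_\NR}$ is, by the weakly filtered condition, a well-defined automorphism of $C^*(\g\oplus V,\g\oplus V)$. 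Setting $\tilde\Delta:=e^{[\,\cdot\,,T]_\NR}\Delta$ and using that an automorphism intertwines the bracket, I obtain $[\tilde\Delta,\tilde\Delta]_\NR=e^{[\,\cdot\,,T]_\NR}[\Delta,\Delta]_\NR=0$, which is exactly~(i); by Theorem~\ref{graded-Nijenhuis-Richardson-bracket} this makes $\g\oplus V$ an $L_\infty$-algebra. Note that~(i) needs only $T\in H^0$, not yet the Rota--Baxter condition.

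For~(ii), I would first rewrite the Maurer--Cartan equation satisfied by $T$: using $\frkl_k(T,\dots,T)=P[\cdots[\Delta,T]_\NR,\cdots,T]_\NR$ and $P(\Delta)=0$, the equation $\sum_{k\ge1}\frac1{k!}\frkl_k(T,\dots,T)=0$ becomes $P(\tilde\Delta)=0$. Since $P$ is the projection onto $H=\oplus_{n}\Hom^n(\bar{\Sym}(V),\g)$, the identity $P(\tilde\Delta)=0$ says precisely that $\tilde\Delta$ has no component mapping $\bar{\Sym}(V)$ into $\g$; equivalently $\tilde\Delta(\bar{\Sym}(V))\subseteq V$, so $V$ is a sub-$L_\infty$-algebra of the $L_\infty$-algebra $(\g\oplus V,\tilde\Delta)$ produced in~(i). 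Its structure maps are the $V$-valued, pure-$V$-input Taylor coefficients of $\tilde\Delta$; expanding $\tilde\Delta=\sum_{n\ge0}\frac1{n!}[\cdots[\Delta,T]_\NR,\cdots,T]_\NR$ and keeping only the terms in which operators $T_{k_1},\dots,T_{k_m}$ are fed into the $m$ $\g$-slots of a single $\rho_{m+1}$ (the contributions built from the $l_n$ are $\g$-valued and drop out) should reproduce the displayed formula for $\frkl_{t+1}$, proving~(ii).

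For~(iii), the cleanest route passes to coderivations: under the graded Lie algebra isomorphism sending $X\in C^*(\g\oplus V,\g\oplus V)$ to the coderivation $\widehat X$, the automorphism $e^{[\,\cdot\,,T]_\NR}$ becomes conjugation by the coalgebra automorphism $e^{\widehat T}$ induced by $T\colon\bar{\Sym}(V)\to\g$, so $e^{\widehat T}$ intertwines the codifferentials of $\Delta$ and $\tilde\Delta$. Restricting to the sub-$L_\infty$-algebra $(V,\{\frkl_k\})$ of $(\g\oplus V,\tilde\Delta)$ from~(ii) and composing with the strict projection onto the quotient $L_\infty$-algebra $(\g,\{l_k\})$ of $(\g\oplus V,\Delta)$ then exhibits $T$ (with the orientation fixed by the convention in $e^{[\,\cdot\,,T]_\NR}$) as an $L_\infty$-morphism $(V,\{\frkl_k\})\to(\g,\{l_k\})$. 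Concretely, once the formula from~(ii) is substituted, the defining identity of a homotopy relative \RB operator in Definition~\ref{de:homoop}(i)---whose right-hand side is $\sum_{n}\frac1{n!}l_n(T_{k_1}(\cdots),\dots,T_{k_n}(\cdots))$ and whose left-hand side collects the inner $\rho_{m+1}$-expressions into $T_{p-t}(\frkl_{t+1}(\cdots),\cdots)$---reorganizes into the system of $L_\infty$-morphism relations for $T$.

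The conceptual steps are short; the step I expect to be the main obstacle is the combinatorial bookkeeping underlying~(ii) and the concrete form of~(iii). One has to track the Koszul signs $\varepsilon(\sigma)$, the shuffle sets $\mathbb S_{(k_1,\dots,k_m,1,p-1-t)}$ and the factors $\frac1{m!},\frac1{n!}$ generated when $e^{[\,\cdot\,,T]_\NR}\Delta$ is expanded, and verify that separating $\tilde\Delta$ by output type and by the number of $\g$- versus $V$-inputs identifies its $\g$-valued pure-$V$ component with $P(\tilde\Delta)$ and its $V$-valued pure-$V$ component with $\{\frkl_k\}$ exactly as displayed. I would therefore carry out this decomposition first and only afterwards match coefficients against the stated formulas.
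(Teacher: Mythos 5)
Your proposal is correct and takes essentially the same route as the paper's own proof, which is simply deferred to \cite[Proposition 5.11]{LST}: for (i) the exponential $e^{[\cdot,T]_\NR}$ of the degree-$0$ derivation $[\cdot,T]_\NR$ is an automorphism of the graded Lie algebra and so preserves $[\Delta,\Delta]_\NR=0$; for (ii) the Maurer--Cartan equation of Theorem \ref{hmotopy-o-operator-homotopy-lie} is exactly $P\big(e^{[\cdot,T]_\NR}\Delta\big)=0$ (using $P(\Delta)=0$), so the twisted structure maps $\bar{\Sym}(V)$ into $V$ and its pure-$V$ components are the displayed $\frkl_k$; and for (iii) substituting these $\frkl_k$ identifies the defining identity of Definition \ref{de:homoop} with the $L_\infty$-morphism relations, your coderivation picture being an equivalent packaging of the same fact. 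Two harmless imprecisions you may wish to fix: the derivation property of $[\cdot,T]_\NR$ follows from the graded Jacobi identity, not from $[T,T]_\NR=0$; and the convergence of $e^{[\cdot,T]_\NR}\Delta$ is most directly seen from local finiteness in each arity (a fixed $l_n$ or $\rho_n$ can absorb only finitely many insertions of $T$), rather than from the weakly filtered condition.
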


\begin{proof}
 See the proof of \cite[Proposition 5.11]{LST}.
\end{proof}

\begin{rmk}
In the classical case, a relative \RB operator induces a pre-Lie algebra \cite{Bai}.  Now a homotopy relative \RB operator also induces a pre-Lie$_\infty$-algebra, which was introduced in \cite{CL}.   See \cite[Section 5.2]{LST} for details.

\end{rmk}

 \begin{rmk}
   Dotsenko and Khoroshkin studied the homotopy of \RB operators on associative algebras in ~\cite{DK} using the operadic approach, and noted that ``in general compact formulas
are yet to be found".  For  \RB  Lie algebras, one encounters a similarly challenging situation. Nevertheless, we use the controlling algebra and Maurer-Cartan approach to give the concrete formulas of homotopy \RB operators, which could provide some guidance for future studies.
 \end{rmk}

 \section{Deformations, cohomologies and homotopies of \RB Lie algebras of nonzero weights}\label{sec:ass}

Note that there is a more general notion of relative \RB Lie algebras of weight $\lambda$, and the relative \RB Lie algebras studied in previous sections are of weight 0. In this section, we briefly review recent developments of deformations, cohomologies and homotopies of \RB Lie algebras of   weight $\lambda$.

Let $(\g, [\cdot,\cdot]_{\g})$  and $ (\h,
[\cdot,\cdot]_{\h})$ be   Lie algebras. Let $\phi: \g \to \Der(\h)$ be a Lie algebra
  homomorphism, which is called an {\em action} of $\g$ on $\h$.

\begin{defi}\label{defi:rb-lie-algebra}
Let  $\phi: \g\rightarrow\Der(\h)$ be an action of a Lie
algebra $(\g, [\cdot,\cdot]_{\g})$ on a Lie algebra $ (\h,
[\cdot,\cdot]_{\h})$. A linear map $T: \h\rightarrow\g$ is called a {\em
  relative \RB operator  of weight $\lambda$}  on $\g$ with respect
to $(\h;\phi)$  if
\begin{equation} \label{eq:B}
[T(u), T(v)]_\g=T\Big(\phi(T(u))v-\phi(T(v))u+\lambda[u,v]_\h\Big),\quad \forall u, v\in\h.
\end{equation}
In particular, if $\g=\h$ and the action    is the adjoint representation of $\g$ on itself, then  $T$ is called a {\em \RB operator of weight $\lambda$}. A {\em \RB Lie algebra of weight $\lambda$} is a Lie algebra equipped with a \RB operator of weight $\lambda$.
\end{defi}

In \cite{TBGS-2}, the notion of a homotopy relative \RB operator of weight $\lambda$ on a symmetric Lie algebra was introduced using the controlling algebra approach, which was the first step toward the definition of a homotopy relative \RB operator of weight $\lambda$ on an $L_\infty$-algebra. As a byproduct, the controlling algebra of relative \RB operators of weight $\lambda$ was given in \cite[Corollary 2.17]{TBGS-2}. More precisely,
let $\phi:\g\lon\Der(\h)$ be an action of a Lie algebra $\g$ on a Lie algebra $\h$. Then $(\oplus_{n=0}^{+\infty}\Hom(\wedge^{n}\h,\g),\Courant{\cdot,\cdot},\dM)$ is a differential graded Lie algebra, where the differential $\dM:\Hom(\wedge^{n}\h,\g)\lon\Hom(\wedge^{n+1}\h,\g)$ is given by
\begin{eqnarray*}
&&(\dM g) ( v_1,\cdots, v_{n+1})\\
&=&\sum_{1\le i< j\le n+1}(-1)^{n+i+j-1}g(\lambda[v_i,v_j]_{\h},v_1,\cdots,\hat{v}_i,\cdots,\hat{v}_{j},\cdots,v_{n+1}),
\end{eqnarray*}
for all $g\in \Hom(\wedge^{n}\h,\g)$ and $v_1,\cdots, v_{n+1} \in\h$, and the graded Lie bracket  $$\Courant{\cdot,\cdot}: \Hom(\wedge^n\h,\g)\times \Hom(\wedge^m\h,\g)\longrightarrow \Hom(\wedge^{m+n}\h,\g)$$ is given by
\begin{eqnarray*}
&&\Courant{g_1,g_2} ( v_1,\cdots, v_{m+n} )\\
&=&\sum_{\sigma\in \mathbb S_{(m,1,n-1)}}(-1)^{1+\sigma}g_1\Big(\phi\big(g_2(v_{\sigma(1)},\cdots,v_{\sigma(m)})\big)v_{\sigma(m+1)}, v_{\sigma(m+2)},\cdots,v_{\sigma(m+n)}\Big)\\
&&\sum_{\sigma\in \mathbb S_{(n,1,m-1)}}(-1)^{mn+\sigma}g_2\Big(\phi\big(g_1(v_{\sigma(1)},\cdots,v_{\sigma(n)})\big)v_{\sigma(n+1)}, v_{\sigma(n+2)},\cdots,v_{\sigma(m+n)}\Big)\\
&&\sum_{\sigma\in \mathbb S_{(n,m)}}(-1)^{1+mn+\sigma}[g_1(v_{\sigma(1)},\cdots,v_{\sigma(n)}),g_2(v_{\sigma(n+1)},\cdots,v_{\sigma(m+n)})]_{\g},
\end{eqnarray*}
for all  $g_1\in \Hom(\wedge^n\h,\g),~g_2\in \Hom(\wedge^m\h,\g)$ and $v_1,\cdots, v_{m+n} \in\h.$
Moreover,
a linear map $T:\h\lon\g$  is a relative \RB operator of weight $\lambda$ on $\g$ with respect to the action $\phi$ if and only if $T$ is a Maurer-Cartan element of the above differential graded Lie algebra.

As soon as one has the above controlling algebra of relative \RB operators of weight $\lambda$, one can obtain immediately the  differential graded Lie algebra that controls deformations of a relative \RB operator $T$ of weight $\lambda$  using the twisted differential $\dM_T:=\dM+\Courant{T,\cdot}$. Meanwhile, one can also define the cohomology of a relative \RB operators $T$ of weight $\lambda$ using the twisted differential $\dM_T.$  See \cite{Das2108} for details. Note that  in \cite{Das2108}, the  controlling algebra of relative \RB operators of weight $\lambda$ on associative algebras were constructed parallelly.

Before \cite{Das2108}, the cohomologies of relative \RB operators of weight $1$ on Lie algebras were given in \cite{JSZ} using a different approach. Namely a   relative \RB operator $T:\h\to \g$ of weight $1$ induces a new Lie algebra $(\h,[\cdot,\cdot]_T)$ and a representation  $\theta: \h\rightarrow\gl(\g)$ of $(\h,[\cdot,\cdot]_T)$ on the vector space $\g$, where $[\cdot,\cdot]_T$ and  $\theta$ are given  by
\begin{eqnarray}
\label{eq:desLieb}[u,v]_T&=&\phi(T(u))v-\phi(T(v))u+[u,v]_{\h}, \\
\label{lierepT}
\theta(u)x&=&T(\phi(x)u)+[T(u), x]_\g.
\end{eqnarray}
 The Chevalley-Eilenberg cohomology of the Lie algebra $(\h,[\cdot,\cdot]_T)$ with coefficients in the representation $\theta$ is taken to be the cohomology of the relative \RB operator $T$. In the same paper, the cohomologies of relative \RB operators of weight $1$ on Lie groups were also introduced and the classical Van Est map was extended to the context of cohomologies of relative \RB operators   on Lie groups and Lie algebras.

 Using Voronov's higher derived brackets \cite{Vo},  Caseiro and   Nunes da Costa succeed in defining homotopy relative \RB operators of weight 1 on $L_\infty$-algebras with respect to $L_\infty$-actions \cite{CC}, generalized some results in \cite{LST} and \cite{TBGS-2}. In the associative algebra context,
  Wang and Zhou studied  homotopy \RB associative algebras of weight $\lambda$  in \cite{WZ} and showed that the operad governing
homotopy \RB associative  algebras is a minimal model of the operad of \RB associative algebras. The cohomologies of \RB associative algebras of weight $\lambda$ were also given in the same paper. Parallelly, the cohomologies of \RB Lie algebras of weight $\lambda$ were   given in \cite{Das2109} by which   abelian extensions and formal  deformations are studied.

\vspace{2mm}

\noindent{\em Acknowledgements.} This research was partially supported by NSFC (11922110).

\end{document}